\newtheorem{theorem}{Theorem}
\newtheorem{lemma}[theorem]{Lemma}
\newtheorem{proposition}[theorem]{Proposition}
\newtheorem{corollary}[theorem]{Corollary}
\newtheorem{fact}[theorem]{Fact}
\newtheorem{claim}[theorem]{Claim}
\newtheorem{definition}[theorem]{Definition}
\def\F{{\mathbb F}}
\newcommand{\ket}[1]{{|{#1}\rangle}}
\newcommand{\poly}{\mathrm{poly}}
\newcommand{\hparam}{\eta}
\newcommand{\HSP}{\mbox{\rmfamily\textsc{HSP}}}
\newcommand{\HPGP}{\mbox{\rmfamily\textsc{HPGP}}}
\newcommand{\SDE}{\mbox{\rmfamily\textsc{SDE}}}
\newcommand{\NP}{\mbox{\rmfamily\textsc{NP}}}
\newcommand{\suppress}[1]{}
\begin{document}


\title{Solving systems of diagonal polynomial equations 
over finite fields}

\author{G\'abor Ivanyos
\thanks{Institute for Computer Science and Control, Hungarian 
Academy of Sciences,
Kende u. 13-17, 
1111 Budapest, Hungary
{Email: \tt Gabor.Ivanyos@sztaki.mta.hu}}
\and
Miklos Santha
\thanks{CNRS, LIAFA, Universit\'e Paris Diderot 75205 Paris, France
and
Centre for Quantum Technologies, National University of Singapore, 
Singapore 117543
{Email: \tt miklos.santha@gmail.com}}}

\date{}
\maketitle

\begin{abstract}
We present an algorithm to solve a system of diagonal polynomial equations 
over finite fields when the number of variables is greater than some 
fixed polynomial of the number of equations whose degree depends only 
on the degree of the polynomial equations. 
Our algorithm works in time polynomial in the number of equations and 
the logarithm of the size of the field, whenever the degree of the polynomial 
equations is constant. As a consequence we design polynomial time quantum 
algorithms for two algebraic hidden structure problems: 
for the hidden subgroup problem in certain semidirect product 
$p$-groups of constant nilpotency class, and for the multi-dimensional 
univariate hidden polynomial graph problem when the degree of the 
polynomials is constant\footnote{This
is the accepted version of a manuscript to appear in
Theoretical Computer Science, see
{\tt http://dx.doi.org/10.1016/j.tcs.2016.04.045}.
\\
\copyright 2016. Made available under the CC-BY-NC-ND 4.0 license
{\tt http://creativecommons.org/licenses/by-nc-nd/4.0/}.
\\
An extended abstract reporting
on preliminary versions of the results 
has appeared in \cite{IS-FAW}.}.
\\~\\

\end{abstract}

Keywords:
Algorithm, Polynomial equations, Finite fields, Chevalley--Warning theorem, Quantum computing
\\
MSC (2010): 12Y05, 68W30, 68Q12


\section{Introduction}
Finding small solutions in some well defined sense for a system of integer linear equations
is an important, well studied, and computationally hard problem. {\em Subset Sum}, 
which asks the solvability of a single equation in the binary domain is
one of Karp's
original 21 $\NP$-complete problems~\cite{Karp}.

The guarantees of many lattice based cryptographic systems come from the average case hardness
of {\em Short Integer Solution}, dating back to Ajtai's breakthrough work~\cite{Ajtai},
where we try to find short nonzero vectors in a random integer lattice.
Indeed, this problem has a remarkable worst case versus average case hardness property: solving
it on the average is at least as hard as solving various lattice problems in the worst case,
such as the decision version of the shortest vector problem, and finding short linearly independent vectors.

Turning back to binary solutions, deciding if there exists a 
nontrivial zero-one  solution of the system of linear equations
\begin{equation}
\begin{matrix}
a_{11}y_1 + \ldots + a_{1n}y_n &=& 0\\
\vdots & \vdots & \vdots \\
a_{m1}y_1 + \ldots + a_{mn}y_n &=& 0
\end{matrix}
\label{degreeone}
\end{equation}
in the finite field $\F_q$, where $q$ is a power of
some prime number $p$, is easy when $q=p=2$. However,
by modifying the standard reduction of {\em Satisfiability} to {\em Subset Sum}~\cite{Sipser}
it can be shown that it is an $\NP$-hard problem for $q \geq 3$.

The system~(\ref{degreeone}) is equivalent to the system of equations
\begin{equation}
\begin{matrix}
a_{11}x_1^{q-1} + \ldots + a_{1n}x_n^{q-1} &=& 0\\
\vdots & \vdots & \vdots \\
a_{m1}x_1^{q-1} + \ldots + a_{mn}x_n^{q-1} &=& 0
\end{matrix}
\label{degreeq-1}
\end{equation}
where we look for a nontrivial solution in the whole $\F_q^n$.

In this paper we will consider finding a nonzero solution for  a system of diagonal polynomial
equations similar to~(\ref{degreeq-1}), but where
more generally, the variables are raised to some power $d\geq 2$. We state
formally this problem.
\begin{definition}\label{def:SDE}
{\rm
The {\em System of Diagonal Equations} problem $\SDE$ is parametrized by a finite field
$\F_q$ and
three positive integers $n,m$ and $d$.\\
\vbox{\begin{quote}
$ \SDE(\F_q, n, m, d)$\\
{\em Input:} 
A system of polynomial equations over $\F_q$:
\begin{equation}
\begin{matrix}
a_{11}x_1^d + \ldots + a_{1n}x_n^d &=& 0\\
\vdots & \vdots & \vdots \\
a_{m1}x_1^d + \ldots + a_{mn}x_n^d &=& 0
\end{matrix}
\label{degreed}
\end{equation}
{\em Output:} A nonzero solution $(x_1, \ldots, x_n) \ne {\overrightarrow 0}$.
\end{quote}
}
}
\end{definition}
Here $\overrightarrow 0$ stands for the zero vector of length $n$. 
(We will use this notation
where we want to stress the distinction between the zero element of a field 
and the zero vector of a vector space.)

For $j=1, \ldots, n$, let us denote by $v_j$ the column vector 
$(a_{1j},\ldots,a_{mj})^T \in
\F_q^m$. Then
the system of equations~(\ref{degreed})  is the same as
\begin{equation}
\sum_{j=1}^n x_j^dv_j={\overrightarrow 0}.
\label{eq:vector}
\end{equation}
That is, solving $ \SDE(\F_q, n, m, d)$ is equivalent to 
the task of representing the zero vector as
a nontrivial linear combination of a subset of
$\{v_1,\ldots,v_n\}$
with $d$th power coefficients. We present our algorithm actually as solving this vector problem.
The special case $d=q-1$ is the vector zero sum problem
where the goal is to find a non-empty subset of the given vectors
with zero sum.

Under which conditions can we be sure that for system~(\ref{degreed}) there exists a nonzero solution?
The elegant result of Chevalley~\cite{Chevalley} and Warning~\cite{Warning} states that the number of solutions of a
general (not necessary diagonal) system of 
polynomial equations
is a multiple of the characteristic $p$ of $\F_q$, whenever
the number of variables is greater than the sum of the degrees of the polynomials.
For diagonal systems (\ref{degreed}) this means that 
when
$n>dm$, the existence of  a nonzero solution is assured. 

In general little is known about the complexity of finding another
solution, given a solution of a system which satisfies the Chevalley-Warning condition. 
When $q=2$, Papadimitriou has shown~\cite{Papa} that
this problem is in the complexity class Polynomial Parity Argument (PPA), 
the class of NP search 
problems where the existence of the solution is guaranteed by the fact that
in every finite graph the number of vertices with odd degree is even.
This implies that it cannot be NP-hard unless NP = co-NP. 
It is also unlikely that the problem is in P since Alon has shown~\cite{Alon} that this would 
imply that there are no one-way permutations.

Let us come back to our special system of equations~(\ref{degreed}). 
In the case $m=1$, a nonzero solution can be found in polynomial time for a single equation which satisfies
the Chevalley condition due to the remarkable work
of van de Woestijne~\cite{vdW} where he proves the following. 
\begin{fact}\label{fact:woestijne}
In deterministic polynomial time in $d$ and $\log q$ we can find a nontrivial solution for
\begin{equation*}
a_1 x_1^d + \ldots  + a_{d+1} x_{d+1}^d =0.  
\label{single-hom}
\end{equation*}
\end{fact}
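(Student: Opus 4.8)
Since Fact~\ref{fact:woestijne} is a theorem of van de Woestijne, I describe the strategy behind its proof. The approach is to reduce the single equation of degree $d$ to a bounded chain of equations of \emph{prime} degree, and then to settle the prime case using the multiplicative structure of $\F_q^*$.

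First, the easy reductions. One may assume every $a_i\neq 0$ (otherwise the $i$-th unit vector is already a nontrivial solution) and $d\geq 2$. Replacing $d$ by $\gcd(d,q-1)$ is harmless: the $d$-th powers of $\F_q$ coincide with the $\gcd(d,q-1)$-th powers, and an element written as $t^{\gcd(d,q-1)}$ is re-expressed as $x^{d}$ by taking $x=t^{k}$ with $kd\equiv\gcd(d,q-1)\pmod{q-1}$ (a congruence solvable by the extended Euclidean algorithm), so no root is ever extracted. Hence we may assume $d\mid q-1$; the $d$-th powers then form the subgroup $D\leq\F_q^*$ of index $d$, and the coset $aD$ of any $a$ is read off from the single exponentiation $a^{(q-1)/d}$.

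Next comes a descent through the prime factors of $d$. Using $x_i^{d}=(x_i^{e})^{\ell}$ for a prime $\ell\mid d$ with $d=\ell e$ (note $\ell\mid q-1$ too), one groups the variables and, by a value-set computation for the grouped terms, passes to a diagonal equation of degree $e$ with enough variables to recurse; iterating, the exponent descends to $1$, where $\sum b_iy_i=0$ in at least two variables is solved by $(b_2,-b_1,0,\dots,0)$, and composing the substitutions back up yields the original solution. The value-set computation combines pigeonhole among the finitely many cosets of $D$ lying inside the $\ell$-th powers with the fact that value sets such as $\{au^{\ell}+bv^{\ell}:u,v\in\F_q\}$ fill up $\F_q$ once $q$ exceeds a bound polynomial in $\ell$ (Weil's estimate for the plane curves $au^{\ell}+bv^{\ell}=c$); the finitely many smaller fields, all with $q=O(\poly(d))$, are dealt with directly.

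The main obstacle is to carry all of this out \emph{deterministically} in time polynomial in $d$ and $\log q$. The block step, done naively, asks one to solve auxiliary equations such as $au^{\ell}+bv^{\ell}=c$, i.e.\ to exhibit preimages under maps that amount to extracting $\ell$-th roots in $\F_q$ --- already $\ell=2$ is the classical square-root problem, for which no deterministic $\poly(\log q)$ algorithm is known unconditionally. The resolution, which is the genuinely substantive part of van de Woestijne's work, is never to solve for the root of a prescribed element but to parametrise the relevant value sets intrinsically: one carries the equations along as polynomials and replaces root finding by gcd computations and arithmetic in small rings $\F_q[t]/(f)$, exploiting for instance norm forms of cyclic extensions $\F_q[t]/(t^{\ell}-c)$ when $c$ is a non-$\ell$-th power. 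Making this uniform across all residues, and reconciling it with the small-$q$ exceptions, is where the bulk of the proof lies.
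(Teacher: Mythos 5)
The paper offers no proof of this statement at all --- it is imported as a black box from van de Woestijne's thesis \cite{vdW} --- so the only question is whether your text would itself establish the Fact, and it does not. Your elementary reductions (assuming every $a_i\neq 0$, replacing $d$ by $\gcd(d,q-1)$ without ever extracting a root) are correct and match what this paper does in Section~3, but everything after that is a description of intent rather than an argument: the descent through the prime factors of $d$ is asserted, the ``value-set computation'' driving it is supported only by existence statements (the Weil bound tells you that $\{au^{\ell}+bv^{\ell}:u,v\in\F_q\}$ is all of $\F_q$ once $q$ is large, but that gives no deterministic way to hit a prescribed value --- exhaustive search is exponential in $\log q$), and the point you yourself single out as the crux, doing all of this deterministically without $\ell$-th root extraction or nonresidue finding, is explicitly left as ``where the bulk of the proof lies.'' Since that step is precisely the content of the theorem, the proposal re-attributes the Fact rather than proving it; as a blind proof attempt it has a genuine gap.

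For comparison, the deterministic device that this paper actually borrows from \cite{vdW} (Subsection~\ref{subsec:class}) involves no character sums and no auxiliary curves: write $\F_q^*=H_\pi\times H_{\pi'}$, where $\pi$ is the set of primes dividing $d$. The two components of any $\alpha$ are computable by exponentiations; on $H_{\pi'}$ the $d$-th root is unique and is again a single exponentiation; and $H_\pi$ has $d$-smooth order, so coset identification and root extraction \emph{inside the subgroup generated by the field elements the algorithm has actually encountered} can be done by Pohlig--Hellman discrete logarithms \cite{Pohlig} in time $\poly(d,\log q)$, with that subgroup updated at most $\log q$ times when a new element falls outside it --- this is exactly how the need for nonresidues is avoided. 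If you intend a proof rather than a pointer to \cite{vdW}, a mechanism of this kind must be developed in full; as written, your sketch leaves the decisive step unproved and gestures at tools (Weil estimates, norm forms of $\F_q[t]/(t^{\ell}-c)$) that are neither worked out nor shown to suffice.
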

In the case of more than one equation we don't know how to find a nonzero solution 
for system~(\ref{degreed}) under just the Chevalley condition. However, if we relax the problem,
and take much more variables than are required for the existence of a nonzero solution, we are able to give a
polynomial time solution.
Using van de Woestijne's result for the one dimensional case,
a simple recursion 
based on reducing one big system with $m$
 equations into $d+1$ subsystems with $m-1$ equations
shows that if $n \geq (d+1)^m$ then 
$ \SDE(\F_q, n, m, d)$ can be solved in deterministic polynomial time in $n$ and $\log q$.
The time complexity of this algorithm is therefore polynomial for any fixed $m$.
The case when $d$ is fixed
and $m$ grows appears to be more difficult. 
To our knowledge,
the only existing result in this direction is the case
$d=2$ for which it was shown in the paper~\cite{ISS} by the authors
and Sanselme
that there exists a (randomized) algorithm that,
when $n = \Omega(m^2)$,
solves 
$ \SDE(\F_q, n, m, d)$ in polynomial time in $n$ and $\log q$.
In the main result of this paper we generalize this result 
by showing, for every constant $d$,
the existence of 
a deterministic algorithm 
that, for every $n$ larger than some polynomial function of $m$,
solves $ \SDE(\F_q, n, m, d)$ in polynomial time in $n$ and $\log q$.
\begin{theorem}\label{thm:sde}
Let $d$ be constant. 
For $n > d^{d^2 \log d} (m+1)^{d \log d}$, the problem
$ \SDE(\F_q, n, m, d)$
can be solved in time polynomial in $n$ and $\log q$.
\end{theorem}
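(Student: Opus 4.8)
The plan is to prove Theorem~\ref{thm:sde} by recursion on the degree: from an instance of degree $d$ one passes to instances of degree $d/p$, where $p$ is the least prime divisor of $d$, so that after at most $\log_2 d$ stages the degree is $1$ and the problem becomes ordinary linear algebra. Throughout I work with the vector formulation~(\ref{eq:vector}): given $v_1,\dots,v_n\in\F_q^m$, find a nontrivial relation $\sum_j x_j^d v_j=\overrightarrow 0$. Two harmless normalizations come first. Since the $d$-th powers of $\F_q$ are exactly the $d'$-th powers for $d'=\gcd(d,q-1)$, and the bound in the theorem only grows with $d$, one may assume $d\mid q-1$; then the nonzero $d$-th powers form the unique subgroup $H\le\F_q^*$ of index $d$, and solving~(\ref{eq:vector}) means finding a nonempty set $S$ and coefficients $h_j\in H$ ($j\in S$) with $\sum_{j\in S}h_jv_j=\overrightarrow 0$. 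When $d=1$ this is just the search for a linear dependence among the $v_j$, available as soon as $n>m$; this is the base of the recursion.

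The core is a degree-reduction step. Let $K\le\F_q^*$ be the subgroup of index $d/p$, so $H\le K$ and $K/H\cong\Z/p$, and fix $\tau$ generating $K/H$. Partition $\{1,\dots,n\}$ into blocks $B_1,\dots,B_N$, each chosen large enough to be handled recursively at degree $d/p$. Inside a block, a degree-$(d/p)$ relation is a relation $\sum_{j\in B_i}c_jv_j=\overrightarrow 0$ with $c_j\in K\cup\{0\}$; splitting its support according to the coset of $c_j$ in $K/H$ exhibits a single direction $w_i$ together with $d$-th-power representations of $w_i,\tau w_i,\dots,\tau^{p-1}w_i$ over pairwise disjoint parts of $B_i$ (and if the whole support lies in one coset of $H$ one reads off, after at most one rescaling, a genuine solution of the original system and stops). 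Consequently block $i$ can realize $c\,w_i$ for every $c\in K\cup\{0\}$, by scaling the appropriate sub-relation by a $d$-th power; either some $w_i$ is $\overrightarrow 0$---which again yields a solution---or the super-vectors $w_1,\dots,w_N$ form an instance of $\SDE(\F_q,N,m,d/p)$, and any nontrivial solution of it lifts block by block to a nontrivial solution of the original instance. Unrolling over the $\le\log_2 d$ stages and bookkeeping how large the blocks must be at each stage yields the sufficient condition $n>d^{d^2\log d}(m+1)^{d\log d}$, and for constant $d$ the whole procedure is deterministic and runs in time polynomial in $n$ and $\log q$, the one-dimensional pieces being settled by Fact~\ref{fact:woestijne}.

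The step I expect to be the main obstacle is making the block reduction genuinely lower the degree, which hinges on the position of $-1$ relative to $H$. The coset-splitting above produces a family of \emph{proportional} $d$-th-power combinations (and thus lets block $i$ realize the whole $K$-orbit of $w_i$) only when $-\tau\notin H$; for $p=2$ one checks this is equivalent to $-1$ being a $d$-th power, i.e.\ it fails exactly when $(q-1)/d$ is odd, and in that case the putative reduction collapses back to degree $d$. Circumventing this---plausibly by first stripping the $2$-part of the exponent, by routing the recalcitrant one-dimensional subproblems through Fact~\ref{fact:woestijne}, or by a dedicated argument when $(q-1)/d$ is odd---is the technical crux, and it is also the source of the extra logarithmic factor in the exponent of the final bound; the case $p>2$ requires a similar but somewhat more involved handling of proportionality among the $p$ pieces.

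The remaining work is routine: checking that every recursive call is on a strictly smaller degree while still having more than the required number of variables, that the various liftings and rescalings preserve nontriviality of the solution, and that the block sizes multiply out across the stages to give precisely the stated polynomial bound in $m$ (with the degree of that polynomial, and the leading constant, depending only on $d$).
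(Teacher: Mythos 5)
Your plan is a genuinely different route from the paper's (the paper never reduces the degree; see below), but as it stands it has a gap at exactly the step you yourself flag as the crux, and that gap is not a technicality --- it is where all the real work lies. Your degree-reduction step relies on the claim that, after splitting a degree-$d/p$ relation inside a block according to the cosets of $H$ (the $d$th powers) in $K$ (the $(d/p)$th powers), the block exhibits ``a single direction $w_i$'' whose full $K$-orbit it can realize with $d$th-power coefficients. For $p=2$ the split $u_0+\tau u_1=0$ does give proportionality, but the realizable coefficient set is $H\cup(-\tau)H$, so you only get the full orbit when $-1\in H$, i.e.\ when $(q-1)/d$ is even; when it is odd the step collapses back to degree $d$, as you note, and none of your three suggested fixes is worked out (note that $d=2$, $q\equiv 3 \bmod 4$ is already such a case, and the paper needs its full machinery there). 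Worse, for an odd prime $p\mid d$ the claim is simply false: the coset pieces $u_0,\dots,u_{p-1}$ satisfy the single relation $\sum_r \tau^r u_r=0$ and are in general not proportional at all, so there is no ``single direction'' and no $K$-orbit to realize; ``more involved handling of proportionality among the $p$ pieces'' is not available, because the proportionality does not exist. Since every recursive stage with $p>2$ (and every stage with $p=2$ and $(q-1)/d$ odd) breaks, the recursion cannot be closed, and the claimed bookkeeping leading to the bound $n>d^{d^2\log d}(m+1)^{d\log d}$ cannot be checked. A secondary omission: identifying the coset of $H$ containing a given coefficient, deterministically and without a nonresidue in hand, also needs an argument (the paper spends Subsection~\ref{subsec:class} on this).

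For comparison, the paper sidesteps the proportionality problem entirely. It keeps the degree equal to $d$ throughout and, instead of forcing the coset pieces onto one line, it concatenates them into longer vectors and takes repeated linear dependencies, building (Lemma~\ref{lemma:dcube}) a $d$-dimensional ``hypercube'' of $d$th-power combinations $w_{\underline a}$ with pairwise disjoint supports satisfying $\sum_s\zeta_s w_{{\underline a}(j,s)}=0$ in every coordinate direction $j$. The unknown coset representatives $\zeta_s$ are then eliminated not by proportionality but by an alternating sum over the symmetric group (equation~(\ref{66})), which yields two colliding $d$th-power representations of the same vector (Theorem~\ref{thm:twofold}); iterating this $\lceil\log_2(d+1)\rceil$ times accumulates $d+1$ disjoint representations of one vector, which are finally combined into a representation of zero using van~de~Woestijne's result (Fact~\ref{fact:woestijne}) applied to $z_1^d+\cdots+z_{d+1}^d=0$. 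So Fact~\ref{fact:woestijne} enters only at the very end, and the obstruction you ran into ($-1$ or, more generally, the $\zeta_s$ not being $d$th powers) is neutralized by the sign trick rather than by any subgroup tower. If you want to salvage your architecture, you would need a substitute for the ``single direction'' claim --- and the concatenation-plus-cancellation mechanism of the paper is precisely such a substitute.
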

The large number of variables that 
makes a polynomial time solution possible, unfortunately also makes our algorithm
most probably irrelevant for cryptographic applications. Nonetheless, it turns out
that the algorithm is widely
applicable in quantum computing for solving efficiently various algebraic hidden structure problems.
We now explain this connection. 

Simply speaking, in a hidden structure problem we have to find some hidden object
related to some explicitly given algebraic structure $A$.
We have access to an oracle input, which is an unknown member $f$ of a family of black-box functions 
which map $A$ to some finite set $S$. The task is to identify
the hidden object solely from the information one can obtain
by querying the oracle $f$. This means that the only useful information we
can obtain is the structure of the { level sets} $f^{-1}(s)=\{ a
\in A: f(a)=s\}$, $s\in S$, that is, we can only determine whether two
elements in $A$ are mapped to the same value or not. 
In these problems we say that the input $f$ {\em hides} the hidden structure, the output of the problem.
We define now the two 
problems for which we can apply our algorithm for $\SDE$.
\begin{definition}\label{def:hidden}
{\rm
The {\em hidden subgroup problem} $\HSP$ is parametrized by a finite group $G$ and
a family ${\cal H}$ of subgroups of $G$.\\
\vbox{\begin{quote}
\HSP$(G, {\cal H})$ \\
{\em Oracle input:} 
A function $f$ from $G$ to some finite set $S$. \\
{\em Promise:} For some subgroup $H \in {\cal H}$, we have
$$f(x) = f(y) \Longleftrightarrow Hx = Hy.$$
{\em Output:} $H$.
\end{quote}}\\
The {\em hidden polynomial graph problem} $\HPGP$ is parametrized 
by a finite field $\F_q$ and three positive integers $n, m$ and $d$.\\
\vbox{\begin{quote}
\HPGP$(\F_q, n,m,d)$.\\
{\em Oracle input:} 
A function $f$ from $\F_q^n \times \F_q^m$ to a finite set $S$. \\
{\em Promise:}
For some 
$Q : \F_q^n \rightarrow \F_q^m$, where
$Q(x) = (Q_1(x), \ldots , Q_m(x))$, and $Q_i(x)$ is an $n$-variate degree $d$ polynomial over $\F_q$
with zero constant term,
we have
$$f(x,y) = f(x', y') \Longleftrightarrow y - Q(x) = y' - Q(x').$$
{\em Output:} $Q$.
\end{quote}}}
\end{definition}

While no classical algorithm can solve the $\HSP$ with polynomial query complexity 
even if the group $G$ is abelian, 
one of the most powerful results of quantum computing is that it can be solved by a 
polynomial time quantum algorithm for any abelian $G$. Shor's factorization and discrete 
logarithm finding algorithms~\cite{sho97}, and Kitaev's algorithm~\cite{Kitaev} for 
the abelian stabilizer problem are all special cases of this general solution.

Extending the quantum solution of the
abelian HSP to non abelian groups is an active research area
since these instances include several algorithmically important problems.
For example, efficient solutions for the dihedral and
the symmetric group would imply efficient solutions, respectively, for several lattice
problems~\cite{Regev} and for graph isomorphism.  While the non abelian
$\HSP$ has been solved efficiently by quantum algorithms in various 
groups~\cite{BCvD,DMR10,FIMSS03,gsvv01,hrt03,Kuperberg,MRRS04},
finding a general solution seems totally elusive.

An extension in a seemingly different (not "group theoretical")
framework was proposed by Childs,
Schulman and Vazirani~\cite{CSV} who considered the problem where the hidden object is a
polynomial. To recover it we have at our disposal an oracle whose
level sets coincide with the level sets of the polynomial. Childs et
al.~\cite{CSV} showed that the quantum query complexity of this
problem is polynomial in the logarithm of the field size when
the degree and the number of variables are constant.
The first time-efficient quantum algorithm was given by the authors
with Decker and Wocjan~\cite{DISW} for the case of multivariate quadratic
polynomials over fields of constant characteristic. 

The hidden polynomial graph problem $\HPGP$ was defined
in~\cite{DDW09} by Decker, Draisma and Wocjan.
Here the hidden object is again a polynomial, but the oracle is more powerful than in~\cite{CSV} because
it can also be queried on the graphs that are defined by the
polynomial functions.   They obtained a polynomial time quantum
algorithm that correctly identifies the hidden polynomial when the
degree and the number of variables are considered to be constant.
In~\cite{DISW}, this result was extended to polynomials of 
constant degree in a framework that reveals relationship
 to the hidden subgroup problem.
The version of the $\HPGP$ we define here is more general
than the one considered in~\cite{DDW09} in the sense that we are dealing not only with a single polynomial
but with a vector of several polynomials. The restriction on the constant terms of the polynomials
is due to the fact
that level sets of two polynomials are the same if they differ only in their constant terms, and therefore the value of the 
constant term can not be recovered.

It will be convenient for us to consider a slight variant of the hidden polynomial graph problem which
we denote by $\HPGP'$. The only difference between the two problems is that in the case of $\HPGP'$
the input is not given by an oracle function but by the ability to access random {\em level set states},
which are quantum states of the form
\begin{equation}
\label{eq:levstate}
\sum_{x \in \F_q^n}\ket{x}\ket{u+Q(x)},
\end{equation}
where $u$ is a random element of $\F_q^m$. Given an oracle input $f$ for $\HPGP$, a simple and efficient
quantum algorithm can create such a random coset state. Therefore an efficient quantum algorithm for 
$\HPGP'$ immediately provides an efficient quantum algorithm for $\HPGP$.

In~\cite{DHIS} the authors with Decker and H{\o}yer showed
that $\HPGP'(\F_q, 1, m, d)$ is solvable in quantum polynomial time
when $d$ and $m$ are both constant. Part of the quantum algorithm repeatedly solved instances
of $\SDE(\F_q, n, m, d)$ under such conditions. We present here a modification of this method
which works in polynomial time even if $m$ is not constant. For simplicity,
here we restrict ourselves to prime fields. This will be still sufficient
for application to a hidden subgroup problem.
\begin{theorem}\label{thm:hpgp_to_sde}
Let $d$ be constant and $p$ be a prime. If $\SDE(\F_p, n, m, d)$ is solvable in 
(randomized) polynomial time for some $n$,
then $\HPGP'(\F_p, 1, m, d)$  is 
solvable in quantum polynomial time.
\end{theorem}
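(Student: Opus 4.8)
The plan is to give a quantum algorithm that reduces $\HPGP'(\F_p,1,m,d)$ to solving polynomially many instances of $\SDE$ over $\F_p$ of constant degree, in the spirit of~\cite{DHIS} but arranged so that $m$ need not be constant. We may assume $p>d$: otherwise $\F_p$ has bounded size, a univariate degree-$d$ polynomial is not even determined by the level sets in~(\ref{eq:levstate}), and the problem can be handled directly. Write $Q_i(x)=\sum_{k=1}^{d}c_{ik}x^k$, put $c_{\cdot k}=(c_{1k},\dots,c_{mk})\in\F_p^m$ and $\vec c=(c_{\cdot 1},\dots,c_{\cdot d})\in\F_p^{dm}$; the task is to recover $\vec c$.

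\emph{From level set states to phase samples.} Given a level set state~(\ref{eq:levstate}) (in the case $\HPGP'(\F_p,1,m,d)$ the first register ranges over $\F_p$), apply the quantum Fourier transform of $\F_p^m$ to the second register and measure it. Since every $\ket{x}$-amplitude has modulus $1$, the outcome $\xi\in\F_p^m$ is \emph{uniformly} random, and the first register collapses, up to an irrelevant global phase, to $\sum_{x\in\F_p}\omega^{\xi\cdot Q(x)}\ket{x}=\sum_{x}\omega^{R_\xi(x)}\ket{x}$ with $\omega=e^{2\pi i/p}$ and $R_\xi(x)=\sum_{k=1}^{d}\langle c_{\cdot k},\xi\rangle x^k$. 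So in polynomial time we get, for uniformly random $\xi$, a ``degree-$d$ phase sample'': a one-register state $\sum_{x}\omega^{P(x)}\ket{x}$ together with an explicit description of every coefficient of $P$ as a known linear functional of $\vec c$; call such a pair a \emph{degree-$e$ phase sample} whenever $\deg P\le e$.

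\emph{Degree reduction via $\SDE$.} The core step is the claim that $n$ degree-$e$ phase samples yield one degree-$(e-1)$ phase sample, for $n$ chosen so that the relevant $\SDE$ instance is solvable in polynomial time. Tensor the $n$ states into $\sum_{\vec x\in\F_p^n}\omega^{\Phi(\vec x)}\ket{\vec x}$ with $\Phi(\vec x)=\sum_{l=1}^{n}P_l(x_l)$, whose degree-$e$ homogeneous part is $\sum_{l}\beta_l x_l^e$ where $\beta_l=\langle\vec c,\lambda_l\rangle$ for known $\lambda_l\in\F_p^{dm}$. Solve $\SDE$ on $\lambda_1,\dots,\lambda_n$ to obtain $z=(z_1,\dots,z_n)\neq\overrightarrow 0$ with $\sum_l z_l^e\lambda_l=\overrightarrow 0$, hence $\sum_l z_l^e\beta_l=0$. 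Extend $z$ to an invertible matrix $M$ over $\F_p$ with first column $z$, apply the unitary $\ket{\vec x}\mapsto\ket{M^{-1}\vec x}$, and measure all registers but the first. The pre-measurement state is a phase state, so every outcome $(a_2,\dots,a_n)$ occurs (with probability $p^{-(n-1)}$, no exponential loss), leaving the first register in $\sum_{y\in\F_p}\omega^{\Phi(M(y,a_2,\dots,a_n))}\ket{y}$; the coefficient of $y^e$ in $\Phi(M(y,a_2,\dots,a_n))$ equals $\sum_l z_l^e\beta_l=0$ regardless of the $a_j$, so this is a degree-$(e-1)$ phase sample, its remaining coefficients again explicit linear functionals of $\vec c$ computable from $M$, the $a_j$ and the input records. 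Iterating $d-1$ times consumes a supply of degree-$d$ phase samples, in batches, down to a single degree-$1$ phase sample $\sum_{y}\omega^{\langle\vec c,\rho\rangle y}\ket{y}$ with known $\rho\in\F_p^{dm}$; one last Fourier transform over $\F_p$ reads off $-\langle\vec c,\rho\rangle$ exactly.

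\emph{Wrap-up and the main obstacle.} Collecting about $dm$ linear equations $\langle\vec c,\rho\rangle=\cdots$ from independent runs and solving over $\F_p$ recovers $\vec c$, hence $Q$ (a candidate is checked against a fresh level set state, and randomized $\SDE$ failures just trigger a retry). As $d$ is constant and $n=\poly(m)$, each degree-$1$ sample costs $n^{d-1}=\poly(m)$ level set states, every $\SDE$ instance has $\Theta(m)$ equations, constant degree $\le d$ and $\poly(m)$ variables, and all other steps (Fourier transforms over $\F_p$ and $\F_p^m$, the change-of-basis unitary, Gaussian elimination) are polynomial in $m$ and $\log p$; so a polynomial-time $\SDE$ solver gives a quantum polynomial-time algorithm for $\HPGP'(\F_p,1,m,d)$. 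The step I expect to be the main obstacle is the \emph{genericity analysis}: one must show that the uniformly random $\xi$'s and measurement outcomes make the harvested functionals $\rho$ span $\F_p^{dm}$ with overwhelming probability, so that $\poly(m)$ runs give a full-rank system. A secondary point is the bookkeeping of the intermediate $\SDE$ parameters: the naive peeling above uses degrees $2,\dots,d$ and $\Theta(m)$ equations (which is fine whenever constant-degree $\SDE$ is easy, as guaranteed by Theorem~\ref{thm:sde}), whereas matching the hypothesis of the theorem literally---a single degree $d$ and exactly $m$ equations---requires the sharper reduction of~\cite{DHIS}.
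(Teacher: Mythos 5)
Your pipeline is essentially the paper's: Fourier-sample the level set states to get one-register phase states whose phase-polynomial coefficients are known linear functionals of the hidden coefficient vector, then repeatedly tensor $n$ such states and use an $\SDE$ solution on the top-degree functionals to kill the top-degree term, down to a degree-one phase read off by a final Fourier transform. Your implementation of the degree-reduction step (extend the $\SDE$ solution $z$ to an invertible $M$, apply $\ket{\vec x}\mapsto\ket{M^{-1}\vec x}$, measure all but one register) is a correct variant of the paper's (add a fresh uniform register, subtract $\delta_i x$ from register $i$, measure the original $n$ registers). Also, the ``mismatch'' you worry about at the end is not a real discrepancy: the paper's own proof uses exactly the same family of instances, choosing $n=n(\ell,d)$ with $\ell=dm$ so that diagonal systems of every degree $d'\le d$ with $\ell$ equations are solvable; the hypothesis of the theorem is to be read in that spirit (and is supplied by Theorem~\ref{thm:sde}).

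The genuine gap is precisely the step you flagged, and it is not a formality in your version: you finish by harvesting roughly $dm$ functionals $\rho$ from independent runs and need them to span $\F_p^{dm}$, with no argument that they do. Nothing in the pipeline forces the distribution of $\rho$ to be generic; a priori it could live in a proper subspace, and then no number of repetitions recovers $\vec c$. The paper never needs such a spanning statement. First, it proves each run yields a \emph{nontrivial} equation: after a reduction step the new top coefficient is computed explicitly, $Z_{d-1,k}=d\delta_1^{d-1}Y^1_{dk}x_1+\cdots+d\delta_n^{d-1}Y^n_{dk}x_n+\cdots$, and since $p>d$, some $\delta_i\neq 0$ and some $Y^i_{dk}\neq 0$, this is nonconstant in the (uniformly distributed) measurement outcomes, hence nonzero with probability $(p-1)/p$; so nontriviality of the known top-degree functional survives all $d-1$ reductions (this per-step argument is also absent from your write-up and is needed in any version, since otherwise the final equation may be $0=0$). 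Second, instead of accumulating equations, the paper uses each nontrivial equation to eliminate one hidden parameter (substituting it as a known affine combination of the others, which the $\ket{\psi_Y}$ bookkeeping accommodates) and recurses on $\ell-1$ parameters, so $\ell$ rounds suffice with total cost $\ell n^{d-1}$ level set states and failure probability $O(\ell n^{d-1}/p)$, with failures detectable and retriable. To complete your proof you must either establish your spanning claim or adopt this substitute-and-recurse endgame. A smaller but real issue: for $p\le d$ you dismiss the case as handled ``directly,'' but bounded field size does not make it easy when $m$ is unbounded; the paper invokes the constant-$p$ solution of \cite{DISW} (via the method of \cite{FIMSS03}), and some such argument or reference is required.
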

Using Theorem~\ref{thm:sde} it is possible to dispense in the result 
of the authors with Decker and H{\o}yer~\cite{DHIS} with the assumption 
that $m$ is constant.
\begin{corollary}\label{cor:hpgp}
If $d$ is constant then $\HPGP'(\F_p, 1, m, d)$  is 
solvable in quantum polynomial time.
\end{corollary}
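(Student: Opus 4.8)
The plan is to obtain the corollary by composing the two preceding results, with essentially no extra work. Fix the constant $d$, so that the exponent $d\log d$ is an absolute constant and the quantity
\[
N(m) \;=\; \bigl\lceil d^{d^2\log d}(m+1)^{d\log d}\bigr\rceil + 1
\]
is bounded by a fixed polynomial in $m$. First I would invoke Theorem~\ref{thm:sde} with $q=p$ and $n=N(m)$: since $N(m) > d^{d^2\log d}(m+1)^{d\log d}$, that theorem provides a deterministic algorithm solving $\SDE(\F_p, N(m), m, d)$ in time polynomial in $N(m)$ and $\log p$, hence in time polynomial in $m$ and $\log p$. In particular this is a (randomized) polynomial-time algorithm, so the hypothesis of Theorem~\ref{thm:hpgp_to_sde} is satisfied for the single value $n=N(m)$.

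Then I would feed this into Theorem~\ref{thm:hpgp_to_sde}: with $d$ constant and $p$ prime, the polynomial-time solvability of $\SDE(\F_p, n, m, d)$ for $n = N(m)$ yields a quantum polynomial-time algorithm for $\HPGP'(\F_p, 1, m, d)$, which is exactly the assertion of Corollary~\ref{cor:hpgp}.

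The only points requiring a moment's care, and they are routine, are the following. (i) The threshold in Theorem~\ref{thm:sde} is polynomial in $m$ precisely because $d$ is held constant; this is what makes it efficient both to \emph{generate} the $N(m)$-variable $\SDE$ instance inside the reduction underlying Theorem~\ref{thm:hpgp_to_sde} (by drawing sufficiently many level-set states) and to \emph{solve} it. (ii) Theorem~\ref{thm:hpgp_to_sde} only demands $\SDE$ to be solvable for \emph{some} $n$, so one fixed choice $n=N(m)$ suffices and no uniform family over a range of $n$ is needed. (iii) Theorem~\ref{thm:sde}, though stated over an arbitrary $\F_q$, specializes to the prime field $\F_p$ required by Theorem~\ref{thm:hpgp_to_sde}. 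There is no genuine obstacle here: all the substance has already been carried by Theorems~\ref{thm:sde} and~\ref{thm:hpgp_to_sde}, and the corollary is their immediate combination.
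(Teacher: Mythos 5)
Your proposal is correct and is exactly the paper's route: the corollary is stated as the immediate combination of Theorem~\ref{thm:sde} (which, for constant $d$, gives a polynomial-time algorithm for $\SDE(\F_p,n,m,d)$ once $n$ exceeds a fixed polynomial in $m$) with Theorem~\ref{thm:hpgp_to_sde}, and the paper offers no further argument. Your bookkeeping about choosing one explicit $n=N(m)$ and specializing $q=p$ is fine and matches the intended reading.
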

Bacon, Childs and van Dam in~\cite{BCvD} have considered the $\HSP$ in $p$-groups of the form 
$G=\F_p\ltimes \F_p^m$ when the hidden subgroup belongs to the family ${\cal H}$ 
of subgroups of order $p$ which are not subgroups of the normal subgroup $0 \times \F_p^m$.
They have found an efficient quantum algorithm for such groups as long as $m$ is constant.
In \cite{DISW}, 
based on arguments from
\cite{BCvD} the authors with Decker and H{\o}yer
sketched how the $\HSP(G, {\cal H})$ can be translated
into a hidden polynomial graph problem. For the sake of completeness we state 
here and prove the exact statement about such a reduction.
\begin{proposition}\label{prop:hsp_to_hpgp}
Let $d$ be the nilpotency class of a group $G$ of the form
$\F_p\ltimes \F_p^m$. There is a polynomial time quantum algorithm which reduces
$\HSP(G, {\cal H})$ to $\HPGP'(\F_p, 1, m, d)$.
\end{proposition}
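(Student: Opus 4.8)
The plan is to make the connection between the semidirect product structure of $G = \F_p \ltimes \F_p^m$ and a hidden polynomial map completely explicit, by computing with the group law. First I would fix notation for the action: $G$ acts on $\F_p^m$ through a homomorphism $\F_p \to \mathrm{GL}_m(\F_p)$, $t \mapsto A^t$, where $A$ is a fixed unipotent matrix (unipotent because $G$ is a $p$-group). The product in $G$ is $(s,u)(t,v) = (s+t,\ u + A^s v)$. A subgroup $H \in \mathcal{H}$ has order $p$ and is not contained in $0 \times \F_p^m$, so it is generated by some element $(1, w)$ with $w \in \F_p^m$; thus $H$ is parametrised by the single vector $w$, and recovering $H$ amounts to recovering $w$. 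The key computation is to describe a transversal of $H$ in $G$ and the induced ``coset labelling'' map, and to recognise it as $y - Q(x)$ for a vector $Q$ of degree-$d$ polynomials in one variable $x \in \F_p$.

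Next I would carry out that computation. Iterating the group law, $(1,w)^k = (k,\ \sum_{i=0}^{k-1} A^i w) = (k,\ B_k w)$ where $B_k = \sum_{i=0}^{k-1} A^i = (A^k - I)(A-I)^{-1}$ on the complement of the kernel, but more robustly $B_k$ is a polynomial in $A$; since $A$ is unipotent of class equal to the nilpotency class $d$ of $G$, $(A-I)^d = 0$, and the entries of $B_k$ — hence of $B_k w$ — are polynomials in $k$ of degree at most $d$ with zero constant term (at $k=0$ we get $B_0 = 0$). Now every element of $G$ can be written uniquely as $(1,w)^x \cdot (0,z)$ for $x \in \F_p$, $z \in \F_p^m$, namely with $x$ the first coordinate and $z = A^{-x}(v - B_x w)$ when the element is $(x,v)$; I'd rather arrange the transversal on the other side to keep the cosets $Hg$ right. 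The upshot: the right coset $H(x,v)$ is determined by the pair $(x,\ v - (\text{something polynomial of degree} \le d \text{ in } x \text{ applied to } w))$. Defining $Q(x)$ to be that $\F_p^m$-valued polynomial in the scalar variable $x$ (degree $\le d$, zero constant term, coefficients linear in the unknown $w$), we get exactly $f(x,v) = f(x',v') \iff v - Q(x) = v' - Q(x')$, i.e. $f$ hides $Q$ in the sense of the $\HPGP$. Recovering $Q$ recovers $w$ by linear algebra, hence recovers $H$.

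Then I would address the two remaining points. First, we must move from an oracle function for $\HSP(G,\mathcal{H})$ to the level-set states required by $\HPGP'$: this is the standard ``coset state'' preparation — create the uniform superposition over $G \cong \F_p \times \F_p^m$, query $f$ into an ancilla, measure (or discard) the ancilla, obtaining $\sum_{g \in Hg_0} \ket{g}$ for a uniformly random coset; rewriting $g = (x, u + Q(x))$ as $x$ ranges over $\F_p$ with $u$ the random shift gives precisely the state in~(\ref{eq:levstate}) with $n=1$. Second, I must confirm the degree bound is exactly $d$, the nilpotency class: since the lower central series of $\F_p \ltimes \F_p^m$ has length $d$ iff $(A-I)^{d-1} \ne 0 = (A-I)^d$, the polynomial $B_x w$ genuinely has degree $d$ in general (for suitable $w$), so $\HPGP'(\F_p,1,m,d)$ — and not a smaller degree — is the right target. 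I expect the main obstacle to be purely bookkeeping: getting the \emph{sides} consistent (left vs. right cosets, and whether the semidirect product is written $\F_p \ltimes \F_p^m$ or $\F_p^m \rtimes \F_p$) so that the coset-labelling map comes out in the normalized form $y - Q(x)$ with $Q$ having zero constant term, rather than some twisted version; once the group law is iterated correctly this is routine, and the quantum state preparation is entirely standard.
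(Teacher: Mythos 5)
Your reduction is essentially the paper's own: parametrize the hidden subgroup by its generator $(1,w)$, iterate the group law to get $(1,w)^x=(x,Q(x))$ with $Q$ an $\F_p^m$-valued polynomial in $x$ of degree at most $d$ and zero constant term (using $(B-I)^d=0$, where $d$ is the nilpotency class), observe that translating $H$ by the normal subgroup $0\times\F_p^m$ labels cosets exactly by $y-Q(x)$, and convert standard coset states into the level-set states~(\ref{eq:levstate}); the only inessential difference is that the paper proves the degree/zero-constant-term claim via $A=\log B$, $B^k=e^{kA}$ and Faulhaber's formula, while you expand the unipotent powers directly. One correction to the bookkeeping you flag: with the product $(s,u)(t,v)=(s+t,\,u+B^{s}v)$ it is the \emph{left} cosets $uH$ (with $u$ ranging over the normal subgroup; equivalently, precompose the oracle with $g\mapsto g^{-1}$ if the promise is stated with right cosets) that carry the invariant $y-Q(x)$, whereas the right coset of $(x,v)$ is determined by $B^{-x}\bigl(v-Q(x)\bigr)$, so your stated ``upshot'' about right cosets is the twisted version and one must choose the other side — exactly as the paper does.
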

Putting together Corollary~\ref{cor:hpgp} and Proposition~\ref{prop:hsp_to_hpgp},
it is also possible to get rid of the assumption that $m$ is constant in the result of~\cite{BCvD}.
\begin{corollary}
If the nilpotency class of the group $G$ of the form 
$\F_p\ltimes \F_p^m$ is constant then 
$\HSP(G, {\cal H})$ can be solved in quantum polynomial time.
\end{corollary}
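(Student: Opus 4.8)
The plan is to chain the two ingredients already in place, so the proof is essentially a composition argument. Fix $G = \F_p \ltimes \F_p^m$ and write $d$ for its nilpotency class, which by hypothesis is a constant. First I would invoke Proposition~\ref{prop:hsp_to_hpgp}: it supplies a polynomial time quantum algorithm that, starting from an oracle hiding some $H \in {\cal H}$, reduces $\HSP(G, {\cal H})$ to $\HPGP'(\F_p, 1, m, d)$. Concretely, this reduction can prepare level set states of the form~(\ref{eq:levstate}) for a suitable hidden polynomial map $Q$, and the hidden subgroup $H$ is recoverable from $Q$ by an efficient classical post-processing step built into the reduction.

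Second, since $d$ is constant, Corollary~\ref{cor:hpgp} applies verbatim and yields a quantum algorithm solving $\HPGP'(\F_p, 1, m, d)$ in quantum polynomial time, i.e.\ in time polynomial in $m$ and $\log p$. I would then compose the reduction of Proposition~\ref{prop:hsp_to_hpgp} with this solver, and feed its output $Q$ into the post-processing map, obtaining a quantum algorithm that outputs $H$; this is the desired algorithm for $\HSP(G, {\cal H})$.

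The only thing left is the running-time bookkeeping, and this is where the (very mild) obstacle lies. The natural input size for $\HSP(G, {\cal H})$ is $\log|G| = (m+1)\log p$, so ``quantum polynomial time'' should mean polynomial in $m$ and $\log p$. I would check that each of the three stages respects this bound: the reduction is polynomial time by Proposition~\ref{prop:hsp_to_hpgp}; the $\HPGP'$ solver is polynomial time by Corollary~\ref{cor:hpgp}, where one uses that $d$ constant keeps the number of variables $n$ demanded by Theorem~\ref{thm:sde}, and hence the number of level set states consumed, bounded by a fixed polynomial in $m$; and the post-processing is polynomial time by construction. Since a finite composition of polynomial time quantum procedures is again polynomial time, $\HSP(G, {\cal H})$ is solved in quantum polynomial time. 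All of the genuine content has already been discharged in Theorem~\ref{thm:sde}, Theorem~\ref{thm:hpgp_to_sde} and Proposition~\ref{prop:hsp_to_hpgp}, so beyond this bookkeeping there is nothing further to prove.
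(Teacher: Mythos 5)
Your proposal is correct and follows exactly the paper's route: the paper proves this corollary precisely by composing Proposition~\ref{prop:hsp_to_hpgp} with Corollary~\ref{cor:hpgp}, just as you do, with your running-time bookkeeping being the (implicit) content of that composition. Nothing is missing.
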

We illuminate the main ideas of the proof of Theorem~\ref{thm:sde}
by showing special cases of weaker (randomized) versions 
for $d=2,3$ in Section 2.
Actually, randomization in these algorithms is only required to obtain
quadratic and cubic nonresidues in $\F_q$. We remark that assuming the
Extended Riemann hypothesis, such nonresidues can be found even
deterministically in time polynomial in $\log q$, see~\cite{Bach}.  
The proof of Theorem~\ref{thm:sde} will be given in Section 3. There
we also show how necessity of having nonresidues can be got around.
Finally the
proof of Proposition~\ref{prop:hsp_to_hpgp} will be given in Section 4, and the proof 
of Theorem~\ref{thm:hpgp_to_sde} in Section 5.

\section{Warm-up: the quadratic and cubic cases}

\subsection{The quadratic case}
\begin{proposition}\label{prop:quadratic}
The problem $\SDE(\F_q, (m+1)^2, m, 2)$ can be solved by a randomized
algorithm in time polynomial in $\log q$ and $m$.
\end{proposition}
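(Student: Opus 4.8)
The plan is to solve $\SDE(\F_q, (m+1)^2, m, 2)$ by an iterative dimension-reduction argument that, at each step, finds a nontrivial quadratic-coefficient relation that uses up only a few of the available vectors while cutting the dimension of the span of the surviving vectors. Rename the $n = (m+1)^2$ given vectors $v_1, \dots, v_n \in \F_q^m$. The key elementary ingredient is van de Woestijne's result (Fact~\ref{fact:woestijne}) specialized to $d = 2$: among any $3$ vectors lying in a common $1$-dimensional subspace — equivalently, among $3$ scalars $a_1, a_2, a_3 \in \F_q$ — we can solve $a_1 x_1^2 + a_2 x_2^2 + a_3 x_3^2 = 0$ nontrivially in polynomial time. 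More generally, I would use the following building block: given $m+1$ vectors spanning a space of dimension $\le m$, group them and repeatedly apply the one-equation solver to produce, in polynomial time, a nontrivial vector of squares $(\lambda_1^2, \dots)$ witnessing that some combination $\sum \lambda_j^2 v_j$ lies in a proper subspace, or is zero.

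More concretely, here is the recursion I would set up. Partition the $n = (m+1)^2$ vectors into $m+1$ blocks $B_1, \dots, B_{m+1}$, each of size $m+1$. Within each block $B_i$ we have $m+1$ vectors in $\F_q^m$, hence a linear dependence $\sum_{j \in B_i} c_{ij} v_j = \vec 0$ that we can compute by Gaussian elimination. The obstruction is that these $c_{ij}$ need not be squares. To fix this, I would instead argue dimensionally: consider the subspace spanned by a block; if it is already lower-dimensional we recurse, and otherwise we use the linear dependence together with the single-equation solver to convert a genuine linear relation into one with square coefficients at the cost of introducing at most a bounded number of correction terms. Iterating $m$ times, each step reducing the ambient dimension by one while consuming one block of $m+1$ vectors, terminates with the zero vector expressed as a nontrivial combination $\sum x_j^2 v_j = \vec 0$. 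Tracking the bookkeeping, $(m+1)$ blocks of size $m+1$ suffice, which is exactly the hypothesis $n = (m+1)^2$.

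The main obstacle — and the place where the real idea must go — is precisely the mismatch between \emph{linear} dependencies, which are free, and \emph{square-coefficient} dependencies, which are what the problem demands. Gaussian elimination hands us arbitrary field coefficients, but to feed the recursion we need those coefficients to be squares (or to be repairable into squares using only a controlled number of extra vectors). The quadratic case is favorable because $\F_q^* / (\F_q^*)^2$ has order at most $2$, so "up to a square" there are only two classes of scalars; a random (or ERH-conditional deterministic) quadratic nonresidue $\nu$ lets us normalize any nonzero coefficient to lie in $\{1, \nu\}$. The heart of the argument will be a gadget that, given a linear relation among a handful of vectors with coefficients in $\{1, \nu\}$, trades it for a square-coefficient relation among a slightly larger but still $O(1)$-size set of vectors — this is where the single-equation solver of Fact~\ref{fact:woestijne} is invoked, since solving $x_1^2 + x_2^2 - \nu x_3^2 = 0$ type equations is exactly what converts a nonresidue coefficient into residues. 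Once this gadget is in place, the dimension-reduction recursion and the counting $n = (m+1)^2$ go through routinely, and the total running time is polynomial in $m$ and $\log q$ because each of the $m$ rounds performs one Gaussian elimination and a constant number of calls to the polynomial-time one-equation solver.
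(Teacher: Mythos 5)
Your proposal correctly isolates the difficulty (linear dependencies come with arbitrary coefficients, the problem demands square ones), but it does not resolve it: everything is delegated to an unspecified ``gadget'' that would trade a normalized relation with coefficients in $\{1,\zeta\}$ ($\zeta$ a nonresidue) for a square-coefficient relation at the cost of $O(1)$ extra vectors, and no such gadget is constructed. Worse, a purely local gadget of the kind you describe cannot exist. In the worst case a block of $m+1$ vectors has, up to scaling, a unique linear dependence $\sum_i\alpha_iv_i=0$; after sorting the $\alpha_i$ into square classes this gives only $w_1=-\zeta w_2$ with $w_1,w_2$ square-coefficient combinations of disjoint subsets, and any square-coefficient zero combination supported in the block would force $\delta_2^2=\zeta\delta_1^2$, which is impossible. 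Adding a constant number of correction vectors does not help either: a square-coefficient dependency among at most $2m$ vectors of $\F_q^m$ need not exist at all (the Chevalley--Warning condition requires more than $2m$ vectors), so no procedure that looks only at one block plus $O(1)$ extra vectors can succeed; the repair must combine information across many blocks. Note also that Fact~\ref{fact:woestijne} with $d=2$ needs a \emph{three}-term diagonal equation, so a single collision $w_1=-\zeta w_2$ is not enough raw material for it. Finally, the step ``each round reduces the ambient dimension by one while consuming one block'' is asserted but never justified, so even granting a gadget the counting $n=(m+1)^2$ does not follow from what is written.

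For comparison, the paper's proof is not a dimension-reduction recursion at all (that is the approach of the earlier work \cite{ISS}, which the authors explicitly set aside because it does not extend beyond $d=2$); it is a two-round collision accumulation in the same ambient space. Round one processes the $m+1$ blocks of size $m+1$ and produces, for each $i$, disjointly supported square-coefficient combinations with $w_1(i)=-\zeta w_2(i)$. Round two finds a linear dependence among $w_1(1),\ldots,w_1(m+1)$ and sorts its coefficients the same way, yielding four disjointly supported vectors with $w_{11}=\zeta^2w_{22}$ and $w_{12}=w_{21}=-\zeta w_{22}$. Only now is Fact~\ref{fact:woestijne} invoked, on $\zeta^2\delta_{11}^2-2\zeta\delta_{12}^2+\delta_{22}^2=0$, and expanding $\delta_{11}^2w_{11}+\delta_{12}^2(w_{12}+w_{21})+\delta_{22}^2w_{22}=0$ gives the square-coefficient representation of zero. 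Until you either supply this cross-block second round or reconstruct the recursion of \cite{ISS} in full detail, the central step of your argument is a genuine gap.
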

\begin{proof}
We assume that $p>2$ 
and that we have a non-square $\zeta$ in $\F_q$ at hand.
Such an element can be efficiently found by a random choice. Actually,
this is the only point of our algorithm where randomization is used.
Assuming
ERH, even a deterministic polynomial time method exists for finding
a non-square. Also, as we will see in Section~3, one can even get
around the necessity of nonresidues. As we present this proof and that
for the cubic case for showing the main lines of our general algorithm,
we do not address this issue here.

Our input is a set $V$ of $(m+1)^2$ vectors in $\F_q^m$, and we want to represent the
zero vector as a nontrivial linear combination of some vectors from $V$ where all the coefficients are squares.
The construction is based on the following. Pick any
$m+1$ vectors $v_1,\ldots,v_{m+1}$ from $V$. Since they are linearly dependent, it is easy
to represent the zero
vector as a proper linear combination $\sum_{i=1}^{m+1}\alpha_iv_i = 0$.
Let $J_1=\{i:\alpha_i^{\frac{q-1}{2}}=1\}$ and
$J_2=\{i:\alpha_i^{\frac{q-1}{2}}=-1\}$. 
Using $\zeta$, we can find 
in deterministic polynomial
time in $\log q$ by the Shanks-Tonelli algorithm~\cite{Shanks} field elements
$\beta_i$ such that
$\alpha_i=\beta_i^2$ for $i\in J_1$ and
$\alpha_i=\beta_i^2\zeta$ for $i\in J_2$.
Let $w_{1}=\sum_{i\in J_1} \beta_i^2v_i$ and
$w_{2}=\sum_{i\in J_2}\beta_i^2v_i$. Then
$w_{1}=-\zeta w_{2}$. 
Notice that we are done if either
of the sets $J_1$ or $J_2$ is empty.


What we have done so far, can be considered as a high-level version of 
the approach of our earlier work~\cite{ISS} with Sanselme. 
The method of \cite{ISS} then proceeds with 
recursion to $m-1$. Unfortunately, that approach is appropriate
only in the quadratic case. Here we use a completely different idea which 
will turn to be extensible to more general degrees.

{From} the vectors in $V$ we form $m+1$ pairwise disjoint sets of vectors 
of size $m+1$. By the construction above, we compute 
$w_{1}(1)$, $w_{2}(1),\ldots$, $w_{1}(m+1)$, $w_2(m+1)$, where 
\begin{equation}
w_{1}(i)=-\zeta w_{2}(i),
\label{dependence}
\end{equation}
for $i = 1, \ldots, m+1$. Moreover,
these
$2m$ vectors are represented as linear combinations with nonzero 
square coefficients of $2m$ pairwise disjoint nonempty subsets of the 
original vectors.

Now $w_1(1),\ldots,w_1(m+1)$ are linearly dependent and
again we can find disjoint subsets $J_{1}$ and $J_{2}$ and scalars
$\gamma_i$ for $i\in J_{1}\cup J_{2}$ such that for
$w_{11}=\sum_{i\in J_{1}}\gamma_i^2w_{1}(i)$ and
$w_{12}=\sum_{i\in J_{2}}\gamma_i^2w_{1}(i)$ we have
$w_{11}=-\zeta w_{12}$. But then for
$w_{21}=\sum_{i\in J_1}\gamma_i^2w_{2}(i)$ and
$w_{22}=\sum_{i\in J_2}\gamma_i^2w_{2}(i)$, using equation~(\ref{dependence}) for all $i$,
we similarly have $w_{21}=-\zeta w_{22}$.
On the other hand, if we sum up equation~(\ref{dependence}) for $i \in J_{1}$, we get
$w_{11}=-\zeta w_{21}$. Therefore
$$w_{11} =  \zeta^2 w_{22} ~~ \mbox{{\rm and }}~~ w_{12} = w_{21} = - \zeta w_{22}.$$
By Fact~\ref{fact:woestijne} we can find field elements 
$\delta_{11},\delta_{22},\delta_{12}$,
not all zero, such that $\zeta^2 \delta_{11}^2 - 2\zeta\delta_{12}^2  + \delta_{22}^2 =0,$
and therefore $(\zeta^2 \delta_{11}^2 - 2\zeta\delta_{12}^2  + \delta_{22}^2) w_{22} = 0.$
But $$(\zeta^2\delta_{11}^2 - 2\zeta\delta_{12}^2  +\delta_{22}^2) w_{22} =
\delta_{11}^2w_{11}+
\delta_{12}^2(w_{12}+w_{21})+\delta_{22}^2w_{22}.$$
Then expanding 
$\delta_{11}^2w_{11}+
\delta_{12}^2(w_{12}+w_{21})+\delta_{22}^2w_{22}=0$
gives a representation of the zero vector as a linear combination
with square coefficients (squares of appropriate product of 
\mbox{$\beta$}s, \mbox{$\gamma$}s and \mbox{$\delta$}s) of a subset of the original vectors.
\end{proof}

\subsection{The cubic case}
\begin{proposition}\label{prop:cubic}
Let $n= (9m+1)(3m+1)(m+1)$. 
Then 
$\SDE(\F_q, n, m, 3)$ can be solved by a randomized
algorithm in time polynomial in $m$ and $\log q$.
\end{proposition}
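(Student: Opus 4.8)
The plan is to follow the proof of Proposition~\ref{prop:quadratic}, with the binary split replaced by a ternary one and the recursion run to depth $3$. If $3\nmid q-1$ every element of $\F_q$ is a cube and any $\F_q$-linear dependence among the input vectors already solves the problem, so assume $3\mid q-1$, and also $p>3$ (the small characteristics $p\in\{2,3\}$ being trivial or requiring only minor adjustments). As in the quadratic case the only randomized ingredient is a non-cube $\omega\in\F_q$, found by a random trial (or deterministically under ERH); since $3\mid q-1$ the cosets $(\F_q^*)^3$, $\omega(\F_q^*)^3$, $\omega^2(\F_q^*)^3$ partition $\F_q^*$, and for an element lying in one of them a cube root can be extracted in deterministic polynomial time, by the degree-$3$ analogue of the Shanks--Tonelli procedure using $\omega$.

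The elementary step is the \emph{ternary split}: from a non-trivial relation $\sum_i\alpha_i u_i=\overrightarrow 0$ one groups the indices into $J_0,J_1,J_2$ according to the coset of $\alpha_i$, writes $\alpha_i=\beta_i^{3}\omega^{c}$ for $i\in J_c$, and sets $w_c=\sum_{i\in J_c}\beta_i^{3}u_i$; then $w_0+\omega w_1+\omega^2 w_2=\overrightarrow 0$, and the $w_c$ are cube-coefficient combinations of three pairwise disjoint subsets of the $u_i$. (If only one class occurs we are done; if two occur we keep the triple with a zero entry.) I would apply this three times. First, partition the $n=(9m+1)(3m+1)(m+1)$ input vectors into $(9m+1)(3m+1)$ blocks of size $m+1$, each linearly dependent in $\F_q^m$, and split each, obtaining that many triples $(w_0,w_1,w_2)$ with $w_0+\omega w_1+\omega^2 w_2=\overrightarrow 0$. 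Next group these into $9m+1$ blocks of $3m+1$, view each triple as a vector of $\F_q^{3m}$ so that a block is linearly dependent, and apply a single ternary split acting on all three coordinates at once, obtaining $9m+1$ arrays $(W_{ki})_{k,i\in\{0,1,2\}}$ with $\sum_k\omega^k W_{ki}=\overrightarrow 0$ for every $i$ and $\sum_i\omega^i W_{ki}=\overrightarrow 0$ for every $k$. Finally, view the $9m+1$ arrays as vectors of $\F_q^{9m}$, take a linear dependence among them, and split once more, obtaining a single $3\times3\times3$ array $(X_{kih})$ all of whose lines, in each of the three directions, have vanishing $\omega$-weighted sum, every $X_{kih}$ being a cube-coefficient combination of a subset of the original $v_j$ with the $27$ subsets pairwise disjoint.

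What remains is to produce cube scalars $c_{kih}$, not all zero, with $\sum_{k,i,h}c_{kih}X_{kih}=\overrightarrow 0$; substituting the expressions of the $X_{kih}$ then yields the required solution of $\SDE(\F_q,n,m,3)$, non-trivial because the $27$ subsets are disjoint. The line relations confine the array to $U\otimes U\otimes U\otimes\F_q^m$, where $U=\{y\in\F_q^3 : y_0+\omega y_1+\omega^2 y_2=0\}$, and I would exploit this to collapse the $27$ entries into an effective one-variable diagonal equation in exactly $d+1=4$ unknowns — the counterpart of $\zeta^2\delta_{11}^2-2\zeta\delta_{12}^2+\delta_{22}^2=0$ — and then invoke Fact~\ref{fact:woestijne} on it, reading the $c_{kih}$ off as cubes of suitable products of the $\beta$'s and of the solution just found. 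The delicate point, and the step I expect to be the main obstacle, is precisely this final collapse: choosing a symmetric pattern for the $c_{kih}$ so that the many cross terms cancel in groups and only four coefficients survive — the bookkeeping over $(X_{kih})$ that must replace the quadratic proof's use of the coincidence $w_{12}=w_{21}$. Everything else — Gaussian elimination over $\F_q$, cube-root extraction, one call to Fact~\ref{fact:woestijne} — runs in time polynomial in $m$ and $\log q$.
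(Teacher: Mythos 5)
Your three-level splitting construction is exactly the paper's: ternary classification of the coefficients of a linear dependence, blocks of sizes $m+1$, $3m+1$, $9m+1$, concatenation into $\F_q^{3m}$ and $\F_q^{9m}$, ending with a $3\times 3\times 3$ array of cube-coefficient combinations with pairwise disjoint supports whose lines in all three directions have vanishing $\omega$-weighted sums. But the step you yourself flag as the ``main obstacle'' --- producing cube scalars $c_{kih}$, not all zero, with $\sum_{k,i,h}c_{kih}X_{kih}=\overrightarrow 0$ --- is the actual content of the proposition, and the route you sketch for it does not go through. In the quadratic case the collapse works because $U=\{y:\ y_1+\zeta y_2=0\}$ is one-dimensional, so $U\otimes U$ is one-dimensional and every entry of the $2\times 2$ array is an explicit scalar multiple of the single vector $w_{22}$; a three-term diagonal equation from Fact~\ref{fact:woestijne} then finishes. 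Here your $U=\{y\in\F_q^3:\ y_0+\omega y_1+\omega^2y_2=0\}$ is two-dimensional, $U\otimes U\otimes U$ is eight-dimensional, and the $27$ entries are in general not multiples of any common vector; so there is no ``effective one-variable diagonal equation in $4$ unknowns'' to hand to Fact~\ref{fact:woestijne}, and no substitute cancellation pattern is given. A secondary omission: even granted such a relation, nontriviality requires a nonzero coefficient on an entry whose support is guaranteed nonempty; you never arrange (as the paper does by rescaling the coefficients at each of the three levels) that one specific entry of the final array has nonempty support and is hit by a nonzero coefficient.

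The paper closes the argument by a different, purely linear trick: among the $27$ line relations it keeps only the six indexed by permutation-type triples and adds them with the signs of the corresponding permutations, so that all terms carrying $\zeta_2$ and $\zeta_3$ cancel, leaving $w_{123}+w_{231}+w_{312}-w_{132}-w_{213}-w_{321}=0$. The coefficients $\pm 1$ are themselves cubes, so no call to Fact~\ref{fact:woestijne} is needed at this stage, and nontriviality holds because $w_{123}$ has nonempty support disjoint from the others. Until you supply a concrete cancellation of this kind (your hoped-for four-term collapse, as stated, cannot exist in general), the proof is incomplete.
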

\begin{proof}
We assume that $q-1$ is divisible by $3$  since otherwise the problem is trivial.
By a randomized polynomial time algorithm we can compute 
two elements $\zeta_2,\zeta_3~$from $\F_q$ such that
$\zeta_1=1,\zeta_2,\zeta_3$ are a complete set of representatives
of the cosets of the subgroup $\{x^3 : x \in \F_q^*\}$ of 
$\F_q^*$. Let $V$ be our input set of $n$ vectors in $\F_q^m$, now we want to represent the
zero vector as a nontrivial linear combination of some vectors from $V$ where all the coefficients are cubes.

As in the quadratic case, for any subset of
$m+1$ vectors $v_1,\ldots,v_{m+1}$ from $V$, we can easily find a proper linear combination
summing to zero,
$\sum_{i=1}^{m+1}\alpha_iv_i = 0$.
For $r=1,2,3,$ let $J_r$ be the set of indices such that
$0 \neq \alpha_i=\beta_i^3\zeta_r$. We know that at least one of these three sets is non-empty.
 For each $\alpha_i\neq 0$ we can efficiently
identify the coset of $\alpha_i$ and even find $\beta_i$ using
the method of \cite{Adleman}. 
Let $w_r=\sum_{i\in J_r}\beta_i^3v_i$. Then $\zeta_1w_1+\zeta_2 w_2 +\zeta_3
w_3=0$. Without loss of generality we can suppose that $J_1$ is non-empty
since if $J_r$ is non-empty for $r \in \{2,3\}$, we can just multiply
the $\alpha_i$s simultaneously by $\zeta_1/\zeta_r$.

From any subset of size $(3m+1)(m+1)$ of $V$ 
we can form $3m+1$ groups
of size $m+1$, and within each group we can do the procedure outlined above.
This way we obtain, for  $k=1,\ldots,3m+1$, and $r=1,2,3$,
pairwise disjoint subsets
$J_r(k)$ of indices and vectors $w_r(k)$ such that
\begin{equation}
\zeta_1w_1(k)+\zeta_2 w_2(k) +\zeta_3w_3(k)=0.
\label{cubic_dependence}
\end{equation}
For $k=1,\ldots,3m+1$, we know that
$J_1(k)\not=\emptyset$ and the vectors $w_r(k)$ are combinations
of input vectors with indices form $J_r(k)$ having coefficients
which are nonzero cubes.
Let $W(k) \in \F_q^{3m}$ denote the vector obtained by concatenating
$w_1(k)$, $w_2(k)$ and $w_3(k)$ (in this order). Then
we can find three pairwise disjoint subsets $M_1,M_2,M_3$
of $\{1,\ldots,3m+1\}$, and for each $k\in M_s$, a nonzero field
element $\gamma_k$ such that 
\begin{equation}
\sum_{s=1}^3\zeta_s\sum_{k\in M_s}\gamma_k^3W(k)=0.
\label{cubic_second_dependence}
\end{equation}
We can arrange that $M_2$ is non-empty.
For $r,s\in\{1,2,3\}$, set $J_{rs}=\bigcup_{k\in M_s}J_r(k)$
and $w_{rs}=\sum_{k\in M_s}\gamma_k^3w_r(k)$.
Then $w_{rs}$ is a linear combination of input vectors 
with indices from $J_{rs}$ having coefficients that
are nonzero cubes. The equality~(\ref{cubic_second_dependence})
just states that $\zeta_1w_{r1}+\zeta_2w_{r2}+\zeta_3w_{r3}=0$, for $r=1,2,3$.
Furthermore, summing up the equalities~(\ref{cubic_dependence}) 
for $k \in M_s$, we get $\zeta_1w_{1s}+\zeta_2w_{2s}+\zeta_3w_{3s}=0$,  for $s=1,2,3$.

Continuing this way, from $(9m+1)(3m+1)(m+1)$ 
input vectors we can make 27 linear combinations with
cubic coefficients $w_{rst}$, for $r,s,t =1,2,3$,
having pairwise disjoint supports such that the support
of $w_{123}$ is non-empty and they satisfy the 27
equations 
\begin{center}
$\zeta_1w_{1st}+\zeta_2w_{2st}+\zeta_3w_{3st}=0$
($s,t=1,2,3$);
\\
$\zeta_1w_{r1t}+\zeta_2w_{r2t}+\zeta_3w_{r3t}=0$
($r,t=1,2,3$);
\\
$\zeta_1w_{rs1}+\zeta_2w_{rs2}+\zeta_3w_{rs3}=0$
($r,s=1,2,3$). 
\end{center}
{From} these
we use the following $6$ equations:
\begin{center}
$\zeta_1w_{123}+\zeta_2w_{223}+\zeta_3w_{323}=0$;
\\
$\zeta_1w_{132}+\zeta_2w_{232}+\zeta_3w_{332}=0$;
\\
$\zeta_1w_{213}+\zeta_2w_{223}+\zeta_3w_{233}=0$;
\\
$\zeta_1w_{312}+\zeta_2w_{322}+\zeta_3w_{332}=0$;
\\
$\zeta_1w_{231}+\zeta_2w_{232}+\zeta_3w_{233}=0$;
\\
$\zeta_1w_{321}+\zeta_2w_{322}+\zeta_3w_{323}=0$.
\end{center}
Adding these equalities with appropriate signs so that
the terms with coefficients $\zeta_2$ and $\zeta_3$
cancel and dividing by $\zeta_1$, we obtain
\begin{equation}
\label{eqn:sixterm}
w_{123}+w_{231}+w_{312}-w_{132}-w_{213}-w_{321}=0.
\end{equation}
Observing that $-1 = (-1)^3$, this gives 
a representation of zero
as a linear combination of the input vectors
with coefficients that are cubes. 
(Note that the algorithm described in this proof does not
rely on van~de~Woestijne's result Fact~\ref{fact:woestijne}. This
is because we were in a position to eliminate the $\zeta_i$s and obtained
a linear dependency with coefficients $\pm 1$ which are always cubes of
themselves in $\F_q$, independently of $q$.)


\end{proof}

\section{The general case}
In this section we prove Theorem~\ref{thm:sde}. First we make the simple observation that it is sufficient
to solve $\SDE(\F_q,n,m,d)$ in the case when $d$ divides $q-1$. If it is not the case, then
let $d'=\gcd(d, q-1)$. Then from a nonzero solution of the system 
$$\sum_{j=1}^n x_j^{d'} v_j =0,$$
one can efficiently find a nonzero solution of the original
equation.
Indeed, 
the extended Euclidean algorithm efficiently
finds a positive integer $t$ such that $td = u(q-1) + d'$ for some integer $u$.
Then for any nonzero $x\in \F_q$ we have $(x^t)^d = x^{d'} \mod p$,
and therefore $(x_1^t, \ldots, x_n^t)$ is a solution of equation~(\ref{eq:vector}).

From now on we suppose that $d$ divides $q-1$.
Our algorithm will consist of two major procedures. The first one
is devoted to finding two disjoint subsets of the input vectors, not both
empty, and $d$th power coefficients such that the linear combinations
of the vectors from the two subsets give equal vectors. Notice that
this part already does the job when one of the two sets happen to
be empty or $d$ is odd (or, more generally, a $d$th root of $-1$
is at hand). The second procedure consists of iterative applications
of the first algorithm to obtain a vector with sufficiently many
representations as linear combinations with $d$th power coefficients 
with pairwise disjoint supports.

We will denote by $C(d,m)$ the number of vectors (variables) used by 
our algorithm. For $d=1$, we can obviously take $C(1,m) = m+1$.

The basic idea of the first algorithm is -- like in the cubic and
quadratic case outlined in the previous section -- getting
linear dependencies and effectively putting the coefficients
of these dependencies into cosets of the multiplicative group
of the $d$th powers on nonzero field elements. In the first
subsection, based on an idea borrowed from \cite{vdW},
 we show how to do this without having nonresidues
at hand.

\subsection{Classifying field elements}
\label{subsec:class}

During the procedures of this section, one of the basic tasks is
the following. Given a nonzero field element $\alpha$, one has to
write $\alpha$ as $\alpha=\zeta_i\beta^d$, where $1=\zeta_1$,
$\ldots$, $\zeta_d$ are fixed elements. Ideally, the $\zeta_i$
form a complete system of representatives of the cosets of
the subgroup of the $d$th powers in the multiplicative group
$\F_q^*$. Unfortunately, no deterministic polynomial time
algorithm is known 
to find an element of a nontrivial coset
(unless assuming the generalized Riemann hypothesis). Therefore,
instead of the whole $\F_q^*$, we consider (roughly speaking)
the subgroup generated by nonzero field elements already 
seen and we classify elements according to the cosets
of $d$th powers of this subgroup. The classification fails
(essentially) when we encounter an element outside this group.
Then the subgroup, the sub-subgroup of its $d$th powers as well as
the coset representatives are updated and all the computations
done so far are redone. Obviously, this can happen at most
$\log q$ times, resulting a $\log q$ factor in complexity
(but not in the bound on the number of input vectors necessary for
success).

To describe the details, we need some notation. Let $\pi$
be the set of prime divisors of $d$ and $\pi'$ be the set
of prime divisors of $q-1$ outside $\pi$. Then the multiplicative
group $\F_q^*$ is the (direct) product of two subgroups $H_\pi$ and
$H_{\pi'}$,
where $H_\pi$ consists of the elements of order having prime factors
from $\pi$, while the element of $H_{\pi'}$ are those having an order
whose prime factors are from $\pi'$. Note that the primes in $\pi$
can be computed in time $d^{O(1)}$ by factoring $d$. The primes
in $\pi'$ do not need to be explicitly computed. Instead, by successively
dividing $q-1$ by the primes in $\pi$, we can efficiently (that is, in
time polynomial in $\log q$) compute
the order of the subgroup $H_\pi$, which is the largest divisor
of $q-1$ coprime to $d$. Given an element $\alpha\in \F_q^*$,
one can find in time polynomial in $\log q$ the unique elements 
$\gamma\in H_\pi$ and $\gamma'\in H_{\pi'}$ such that $\alpha=\gamma\gamma'$ (see,
e.g., \cite{vdW} for details). Also, one can efficiently find
the unique element $\delta'\in H_{\pi'}$ such that $\gamma'={\delta'}^d$.
(Actually, $\delta'={{\gamma'}^r}$ where $rd\equiv 1$ modulo the order 
of $H_{\pi'}$.) 

Instead of $H_\pi$ we use the subgroup $H$ of the $\pi$-parts
of the field elements given so far to the classification
procedure as input. We assume that $H$ is given by a generator
$\eta$. Elements $1=\zeta_1,\ldots,\zeta_d\in H$ are also assumed
to be given such that they form a possibly redundant, but complete
system of representatives of cosets of the subgroup $H^d$ consisting
of the $d$th powers from $H$. Initially $\eta=1=\zeta_1=\ldots=\zeta_d$.
Given $\alpha=\gamma\gamma'$, we (attempt to) compute the $\eta$-base 
discrete logarithm of $\gamma$ using the method of Pohlig and Hellman~\cite{Pohlig}. This takes time polynomial in $d$ and $\log q$.
In the case of success, we can use the logarithm to locate the coset
of $\gamma$ and write $\gamma$ as $\gamma=\delta^d\zeta_i$ where $\delta\in
H$. Then $\alpha=\beta^d\zeta_i$, where $\beta=\delta\delta'$.

In the case of failure, we replace $\eta$ by a generator
of the subgroup generated by $\gamma$ and $\eta$ and 
we replace $\zeta_2,\ldots,\zeta_d$ by $\eta$,$\ldots$,$\eta^{d-1}$
(repetitions may occur). We restart the whole algorithm with
these new data.

\subsection{Finding colliding representations}

\label{subsec:twofold}

In this subsection we prove the following.

\begin{theorem}
\label{thm:twofold}
Assume that $d|q-1$ and put $G(d,m)=d^{\frac{d(d-1)}{2}}(m+1)^d$. 
Then, given $G=G(d,m)$ input vectors $v_1,\ldots,v_G\in \F_q^m$,
in time polynomial in $G$ and $\log q$, we can find two disjoint 
subsets $I$ and $J$ of $\{1,\ldots,G\}$ with $I\neq \emptyset$ and 
nonzero field elements $\gamma_j\in \F_q^*$ ($j\in I \cup J$) 
such that $\sum_{i\in I}\gamma_i^dv_i=\sum_{j\in J}\gamma_j^dv_j$.
\end{theorem}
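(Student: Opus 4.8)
The proof generalizes the recursive "doubling" construction seen in the quadratic and cubic cases. The plan is to build a tower of length $d-1$, where at level $k$ we have a collection of vectors in $\F_q^{km}$ (each a concatenation of $k$ blocks from $\F_q^m$) carrying a single linear dependency with coefficients drawn from the fixed coset representatives $\zeta_1,\ldots,\zeta_d$, together with the inherited dependencies from all earlier levels. The base case is exactly the classification procedure of Section~\ref{subsec:class}: from any $m+1$ input vectors we get a proper linear dependency $\sum_{i=1}^{m+1}\alpha_i v_i=0$, and after writing $\alpha_i=\zeta_{r}\beta_i^d$ we collect, for each $r=1,\ldots,d$, the vector $w_r=\sum_{i\in J_r}\beta_i^d v_i$, giving $\sum_{r=1}^d \zeta_r w_r=0$ with pairwise disjoint supports and (after a normalizing multiplication by $\zeta_1/\zeta_r$) the support of $w_1$ nonempty. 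This consumes $m+1$ vectors and produces a $d$-tuple tied by one $\zeta$-dependency.

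At each subsequent level the plan is: take $d m+1$ of the tuples produced at the previous level (which live in $\F_q^{km}$, i.e.\ have $km < dm+1$ coordinates when $k<d$, hence are linearly dependent), apply the classification/dependency step to get a proper dependency among these concatenated vectors, sort its coefficients into the $d$ cosets, and thereby obtain $d$ new concatenated vectors, each in $\F_q^{(k+1)m}$, satisfying one new $\zeta$-dependency — while, crucially, the old $\zeta$-dependency is inherited coordinatewise on each of the $d$ new pieces, and summing the old dependency over each coset block also yields a $\zeta$-dependency on the new pieces. So after $k$ levels we have $d^k$ vectors $w_{r_1\cdots r_k}$ indexed by $\{1,\ldots,d\}^k$, with pairwise disjoint supports in the original $v_i$'s, nonzero $d$th-power coefficients, the support of $w_{1\cdots 1}$ nonempty, and a $\zeta$-dependency in each of the $k$ coordinate directions. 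Counting the vectors consumed: level $1$ uses $m+1$ per tuple, and each further level multiplies by $dm+1 \le d(m+1)$, so after $d-1$ levels the total is at most $(m+1)\prod_{k=1}^{d-1}(dm+1) \le d^{d-1}(m+1)^d$; a slightly more careful bookkeeping (the $k$th level only needs a linear dependency among vectors in $\F_q^{km}$, so $km+1$ suffices) gives the stated $d^{d(d-1)/2}(m+1)^d$.

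After $d-1$ levels we have a $d$-dimensional array $(w_{r_1\cdots r_{d-1}})$ with $d-1$ independent $\zeta$-dependencies. The final step is to extract from these a dependency with coefficients that are $d$th powers and do not involve the $\zeta_i$'s, exactly as the six-term identity $w_{123}+w_{231}+w_{312}-w_{132}-w_{213}-w_{321}=0$ was obtained in the cubic case. For general $d$ one invokes van de Woestijne's result (Fact~\ref{fact:woestijne}): consider the $d$ concatenated pieces at the top level as coordinates of a single vector $W$ in a space where the $\zeta$-dependency reads $\sum_r \zeta_r W_r = 0$; repeated use of the $d-1$ dependencies lets one express everything in terms of one distinguished piece $w_{1\cdots 1}$ with $\zeta$-power multipliers, and then a single diagonal equation $\sum_i \zeta_i^{(\cdots)} \delta_i^d = 0$ in more than $d$ variables — solvable by Fact~\ref{fact:woestijne} — collapses to $\sum \delta_i^d w_{\cdots} = 0$. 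Expanding back through all levels (the coefficient of each original $v_i$ being a product of the $\beta$'s, $\gamma$'s, \ldots, and $\delta$, all raised to the $d$th power) yields two disjoint sets $I,J$ (the positive- and negative-sign parts, using $-1$ absorbed into the $\zeta$-bookkeeping or treated as in the cubic case) with $I\ne\emptyset$ and $\sum_{i\in I}\gamma_i^d v_i = \sum_{j\in J}\gamma_j^d v_j$.

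The main obstacle is the combinatorial control at the inductive step: one must verify that the $d$ newly formed blocks really do inherit all previously established $\zeta$-dependencies in every coordinate direction simultaneously (both the "vertical" ones coming from applying the dependency within each coset and the "horizontal" ones coming from summing an old dependency over a coset), that the supports stay pairwise disjoint and the distinguished block stays nonempty, and that the final elimination of the $\zeta_i$'s using the $d-1$ top-level dependencies together with Fact~\ref{fact:woestijne} can always be arranged. Handling this cleanly — rather than the vector-counting, which is a routine product — is where the real work lies; I would organize it as a lemma about propagating a family of coset-dependencies through one doubling step and then iterate.
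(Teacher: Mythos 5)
Your overall skeleton (repeatedly extract a linear dependency, classify its coefficients into cosets of the $d$th powers, build a multi-indexed family of blocks with pairwise disjoint supports, a distinguished nonempty block, and a $\zeta$-dependency in each coordinate direction) is indeed the paper's Lemma~\ref{lemma:dcube}. But your bookkeeping breaks the very propagation you identify as the main obstacle: to inherit the dependency in the \emph{new} coordinate direction for every multi-index ${\underline a}$, the dependency at level $\ell+1$ must be taken among the concatenations of \emph{all} $d^{\ell}$ blocks produced at level $\ell$, i.e.\ among vectors of length $d^{\ell}m$, so that level needs $d^{\ell}m+1$ (the paper takes $d^{\ell}(m+1)$) groups — not $km+1$ or $dm+1$. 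This is what forces the product $\prod_{\ell=0}^{d-1}d^{\ell}(m+1)=d^{d(d-1)/2}(m+1)^d$ and a tower of depth $d$, not $d-1$; with only $km+1$ tuples per level the blockwise $\zeta$-dependency in the new direction is simply not guaranteed. (Your remark that "more careful bookkeeping" turns $d^{d-1}(m+1)^d$ into the larger stated bound is also backwards.)

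The decisive gap is the final elimination of the $\zeta_i$'s. Your plan — use the accumulated dependencies to write every top-level block as a $\zeta$-power multiple of one distinguished block and then invoke Fact~\ref{fact:woestijne} — only works when each dependency has two terms, i.e.\ for $d=2$ (Proposition~\ref{prop:quadratic}); for $d\ge 3$ a relation $\sum_{s=1}^{d}\zeta_s w_{{\underline a}(j,s)}=0$ expresses a block as a combination of $d-1$ others, not as a scalar multiple of a single one, so the promised reduction to one diagonal equation does not exist. What the paper does instead is run the tower to depth $\ell=d$ and antisymmetrize over permutations: for each ${\underline a}\in S_d$ it uses the dependency in the direction of the entry $1$, sums these relations with signs $\mathrm{sgn}({\underline a})$, and the contributions of $\zeta_s$ for $s\ge 2$ cancel in pairs (each repeated-entry tuple arises from exactly two permutations differing by a transposition), leaving $\sum_{{\underline a}\in S_d}\mathrm{sgn}({\underline a})w_{\underline a}=0$ with coefficients $\pm 1$; taking $I$ and $J$ to be the supports attached to even and odd permutations (the identity permutation guarantees $I\neq\emptyset$) gives the collision. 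In particular, Fact~\ref{fact:woestijne} is not used at all in this theorem — it enters only later, in the collision-accumulation step of Theorem~\ref{thm:sde}. Without this sign/permutation identity (or a substitute for it), your proposal does not produce the required collision for general $d$.
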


\begin{proof}
The algorithm follows the lines already presented in the proof
of Proposition~\ref{prop:cubic} for the cubic case. The main difference is
that here we (possibly) need more rounds of iteration.
For $\ell=1,\ldots,d$,
put $B_\ell(d,m)=d^{\frac{\ell(\ell-1)}{2}}(m+1)^\ell$. 
For ${\underline a} = (a_1, \ldots , a_{\ell})\in \{1,\ldots,d\}^\ell$, for $s \in \{1, \dots, d\}$ and 
for $1 \leq j \leq \ell$, set
$${\underline a}(j,s) =(a_1,\ldots,a_{j-1},s,a_{j+1},\ldots,a_{\ell}).$$

\begin{lemma}
\label{lemma:dcube}
From $B=B_\ell(d,m)$ input vectors 
$v_1,\ldots,v_B$, in time polynomial
in $B$ and $\log q$, we can 
find $d^\ell$ pairwise disjoint subsets 
$J_{\underline a}\subseteq \{1,\ldots,B\}$ and field elements $\beta_1,\ldots,\beta_B$ such that
$J_{(1,\ldots,\ell)}\neq \emptyset$, and if we set 
$w_{\underline a}=\sum_{i\in J_{\underline a}}\beta_i^dv_i,$ then 
we have 
$$\sum_{s=1}^{d}\zeta_sw_{{\underline a}(j,s)}=0,$$
for every ${\underline a}\in \{1,\ldots,d\}^\ell$ and
$j=1,\ldots,\ell$.
\end{lemma}

\begin{proof}
We prove it by recursion on $\ell$.
If $\ell=1$ then any $B_\ell(d,m)=m+1$ vectors from $\F_q^m$
are linearly dependent. Therefore there exist
$\alpha_1,\ldots,\alpha_{m+1}\in\F_q$, not all zero, 
such that $\sum_{i=1}^{m+1}\alpha_iv_i=0$. Using the procedure
of Subsection~\ref{subsec:class}, we find subsets $J_1,\ldots,J_d$
of $\{1,\ldots,m+1\}$ and field elements $\beta_i$ 
($i\in J_1\cup\cdots\cup J_d$), such that for $i\in J_r$
we have $\alpha_i=\zeta_r\beta_i^d$. At least
one of the sets $J_r$ is non-empty. If $J_1$ is empty then we
multiply  the coefficients $\alpha_i$ simultaneously by $\zeta_1/\zeta_r^{-1}$ where
$J_r$ is nonempty to arrange that $J_1$ becomes nonempty.

To describe the recursive step, assume that we are given 
$B_{\ell+1}(d,m)= d^\ell(m+1) B$ vectors. Put
$E=d^{\ell}(m+1)$, and for convenience 
assume that the input vectors are denoted by $v_{ki}$, for
$k=1,\ldots,E$ and  $i=1,\ldots,B$. By the recursive hypothesis, 
for every $k\in\{1,\ldots,E\}$, there exist subsets
$J_{\underline a}(k)\subseteq \{1,\ldots,B\}$
and field elements $\beta_i(k)$ such that
$J_{(1,\ldots,\ell)}(k) \neq \emptyset$, and
with $w_{\underline a}(k)=\sum_{i\in J_{\underline a}(k)}\beta_i(k)^dv_{ki}$,
we have 
\begin{equation}
\sum_{s=1}^{d}\zeta_sw_{{\underline a}(j,s)}(k)=0,
\label{d_dependence}
\end{equation}
for every ${\underline a}\in \{1,\ldots,d\}^\ell$ 
and $j=1,\ldots,\ell$. 

For every $k=1,\ldots,E$, let $W(k)$ be the concatenation of the vectors
$w_{\underline a}(k)$ in a fixed, say the lexicographic,
order of $\{1,\ldots,d\}^\ell$. Then the $W(k)$'s are vectors of length
$d^\ell m<E$. Therefore there exist field elements
$\alpha(1),\ldots,\alpha(E)$, not all zero, such that
$\sum_{k=1}^E \alpha(k) W(k)=0$. For a $k$ such that
$\alpha(k)\neq 0,$ let $\alpha(k)=\zeta_r\gamma(k)^d$ for some $1 \leq r \leq d$
and $\gamma(k)\in \F_q^*$.
The index $r$
and $\gamma(k)$ are computed by the procedure of
Subsection~\ref{subsec:class}. 
For $r=1,\ldots,d$,
let $M_r$ be the set of $k$'s such that $\alpha(k)=\zeta_r\gamma(k)^d$.
We can arrange that $M_{\ell+1}$ is non-empty by simultaneously 
multiplying the $\alpha(k)$'s
by $\zeta_{\ell+1}/\zeta_r$ for some $r$, if necessary. Observe that we have
\begin{equation}
\sum_{s=1}^{d}\zeta_s\sum_{k\in M_s}\gamma(k)^dW(k)=0.
\label{d_second_dependence}
\end{equation}

For $i\in \{1,\ldots,B\}$ and
$k\in \{1,\ldots,E\}$ set $\beta'_{ki}=\gamma(k)\beta_i(k)$.
We fix
${\underline a}'\in \{1,\ldots,d\}^{\ell+1}$,
and we set $\underline a = (a_1', \ldots a_{\ell}')$ and $r = a_{\ell + 1}'$.
We define
$J'_{{\underline a}'} = \{(k,i):k\in M_r\mbox{~and~}i\in 
J_{\underline a}(k)\}$ and
$w'_{{\underline a}'} =
\sum_{(k,i)\in J'_{{\underline a}'}}{\beta'}_{ki}^dv_{ki}$.
Then 
$w'_{{\underline a}'}=\sum_{k\in M_r}\gamma_k^dw_{\underline a}(k)$.
This equality, together with the equalities~(\ref{d_dependence})
imply that for every $j=1,\ldots,\ell$, we have
$$\sum_{s=1}^{d}\zeta_sw'_{{\underline a}'(j,s)}=0.$$
For $j=\ell+1$ consider the equality~(\ref{d_second_dependence}),
from which follows that 
$$\sum_{s=1}^{d}\zeta_s\sum_{k\in M_s}\gamma(k)^dw_{\underline a}(k)=0.$$
Expanding 
$w_{\underline a}(k)$ in the inner sum
$\sum_{k\in M_s}\gamma(k)^dw_{\underline a}(k)$ gives that
it equals $w'_{{\underline a}'(\ell +1,s)}.$
Thus also
$$\sum_{s=1}^{d}\zeta_sw'_{{\underline a}'(\ell + 1,s)}=0,$$
finishing the proof of the lemma.
\end{proof}

\medskip
We apply the procedure of Lemma~\ref{lemma:dcube} for $\ell=d$. From
$B=B_d(d,m)=d^{\frac{d(d-1)}{2}}(m+1)^d$ input vectors
$v_1,\ldots,v_B$, we compute in time polynomial in
$\log q$ and $B$ subsets $J_{\underline a}$,
with
$J_{(12\ldots d)} \neq \emptyset$, as well as nonzero
elements $\beta_1,\ldots,\beta_B\in \F_q$ such that
with
$w_{\underline a}=\sum_{i\in J_{\underline a}}\beta_i^dv_i,$
we have
\begin{equation}
\sum_{s=1}^d\zeta_s w_{{\underline a}(j,s)}=0,
\label{whatever}
\end{equation}
for every $j=1,\ldots,d$ and for every ${\underline a}\in \{1,\ldots,d\}^d$.

Tuples from $\{1,\ldots,d\}^d$ without repetitions are of special interest.
We identify such a $d$-tuple ${\underline a}=(a_1,\ldots,a_d)$ with 
the permutation $i\rightarrow a_i$ from the symmetric group $S_d$ on 
$\{1,\ldots,d\}$. With some abuse of notation, we denote this permutation
also by $\underline a$.
By $\mbox{sgn}({\underline a})$ we denote the {\em sign} of $\underline a$,
considered as a permutation. The sign of $\underline a$ is $1$ if ${\underline a}$ 
is even and $-1$ if ${\underline a}$ is odd.
We show that  
\begin{equation}
\sum_{{\underline a}\in S_d}\mbox{sgn}({\underline a})w_{\underline a}=0.
\label{66}
\end{equation}
For ${\underline a}\in S_d$, let $j_{\underline a}$ be the position 
of $1$ in ${\underline a}$ and for every $s\in \{1,\ldots,d\}$, we denote
by ${\underline a}[s]$ the sequence obtained from $a$ by replacing $1$ with
$s$. Notice that ${\underline a}[s]={\underline a}(j_{\underline a},s)$,
therefore~(\ref{whatever}) implies
\begin{equation}
\label{whatever1}
\sum_{{\underline a}\in S_d}\mbox{sgn}({\underline a})
\sum_{s=1}^d\zeta_s w_{{\underline a}[s]}=0.
\end{equation}
We claim that
\begin{equation}
\label{whatever2}
\sum_{\underline a\in S_d}
\mbox{sgn}({\underline a})\sum_{s=2}^d\zeta_s w_{{\underline a}[s]}=0.
\end{equation}
To see this, observe that for $s>1$ the tuple
${\underline a}[s]$ has entries from $\{2,\ldots,d\}$, where
$s$ occurs twice, while the others once. Any such sequence ${\underline a}'$
can come from exactly two permutations which differ by a transposition:
these are obtained from ${\underline a}'$ by replacing one of the
occurrences of $s$ with $1$. Then~(\ref{66}) is just the difference of 
equalities~(\ref{whatever1}) and~(\ref{whatever2}).

Put 
$$I=\bigcup_{{\underline a}\mbox{\scriptsize~even}}J_{\underline a},\;\;\;
 J=\bigcup_{{\underline a}\mbox{\scriptsize~odd}}J_{\underline a}\mbox{~~~and~~~}
\gamma_i=\beta_i \mbox{~~~for~} i\in I \cup J.$$
 (Here, 
${\underline a}\mbox{~even}$ resp.~${\underline a}\mbox{~odd}$
abbreviates that ${\underline a}$ is an even or an odd permutation,
respectively.)
Then~(\ref{66}) gives the desired pair of colliding representations.
\end{proof}

\subsection{Accumulating collisions}

In this subsection we finish the proof of 
Theorem~\ref{thm:sde}.

\begin{proof}[Proof of Theorem~\ref{thm:sde}]
We assume that $q-1$ is divisible by $d$. By Theorem~\ref{thm:twofold},
from $G(d,m)$ input vectors we can select two disjoint subsets,
not both empty, and find $d$th power coefficients such that 
the corresponding linear combinations represent the same vector.
Notice that we are done if this is the zero vector.

When we have $G(d,m)^2$ input vectors, the procedure
of Theorem~\ref{thm:twofold}, applied to $G(d,m)$ groups of size $G(d,m)$,
gives $G(d,m)$ vectors and two representations as linear
combination with $d$th power coefficients for each. (These
combinations have $2G(d,m)$ pairwise disjoint sets as support.)
Applying the procedure again to the $G(d,m)$ vectors and multiplying
the coefficients gives a vector with 4 representations as linear
combinations having pairwise disjoint support and $d$th power
coefficients. 

Iterating this, using $G(d,m)^{\ell}$ input vectors, we obtain
a vector with $2^\ell$ representations 
as linear
combinations having pairwise disjoint support and 
coefficients that are explicit $d$th powers. When $2^\ell\geq d+1$, we can use 
Fact~\ref{fact:woestijne} to find field elements $z_1,\ldots, z_{d+1}$,
not all zero, such that $z_1^d+\ldots+z_{d+1}^d=0$. 
Multiplying the coefficients of the $i$th representation by $z_i^d$ we obtain the desired representation 
of the zero vector. We have
$$C(d,m)\leq G(d,m)^{\lceil \log_2(d+1) \rceil}\leq
d^{d^2\log d}(m+1)^{d\log d}.$$
\end{proof}

\section{Application in Quantum computing}

\label{sec:qappl}

\subsection{Reduction from the special \HSP{} to \HPGP'{}}

In this part we give the details of a reduction from
a special instance of the hidden subgroup problem
in groups which are semidirect products of an elementary
abelian $p$-groups by a group of order $p$. The arguments
here are quite standard. 

\begin{proof}[Proof of Proposition~\ref{prop:hsp_to_hpgp}]
A semidirect product group of the form $\F_p\ltimes \F_p^m$ can be specified by an
automorphism of $\F_p^m$. The automorphisms of $\F_p^m$ can be identified with nonsingular
$m\times m$ matrices $B$ over $\F_p$ such that $B^p=I$. For such a matrix $B$, the group
$G_B = \F_p \, \leftidx{_B}{\ltimes}{} \F_p^m$
can be represented
as the set of $(m+1)\times (m+1)$ matrices over $\F_p$ 
$$
\left \{\begin{pmatrix}
B^x & v \\
0 & 1
\end{pmatrix} ~~:~~ x \in \F_p, ~ v \in \F_p^m \right \}.
$$
We choose the
{quantum} encoding $\ket{x}\ket{v}$ for the matrix
$$M_B(x,v)=\begin{pmatrix}
B^x & v \\
0 & 1
\end{pmatrix}.$$
Let 
$$
K = 
\left \{ \begin{pmatrix}
B^x & 0 \\
0 & 1
\end{pmatrix} ~~:~~ x \in \F_p \right \}
\mbox{~~~and~~~}
N = 
\left \{ \begin{pmatrix}
I & v \\
0 & 1
\end{pmatrix}~~:~~ v\in \F_p^m \right \}.
$$
Then
$N$ is a normal subgroup of $G$ of index $p$
and $K\cap N=\{1_G\}$. 
For every $v \in \F_p^m$, consider the {cyclic subgroup}
$$H_v = \left \langle  \begin{pmatrix}
B & v\\
0 & 1
\end{pmatrix}  \right \rangle = 
\left \{\begin{pmatrix}
B^x & v(x) \\
0 & 1
\end{pmatrix} ~~:~~ x \in \F_p \right \},
$$
where
$$
v(x) = 
 \begin{pmatrix}
  v_{1}(x)  \\
  \vdots  \\
  v_{m}(x)
 \end{pmatrix}
=
(B^{x-1} + \cdots + B^1 + B^0 ) v.
$$
Then ${\cal H}$, the family of subgroups of $G_B$ of order $p$ which are not subgroups of $N$ is exactly
$ \{H_v : v \in \F_p^m\}$. The hidden function hides some member of ${\cal H}$.
Since $B^p = I $ we also have $ (B-I)^p = 0$. It can be seen that if the nilpotency class of
$G_B$ is $d$ then $d$ is the smallest integer such that $(B - I)^d =0$.
\suppress{
or generated by a matrix of the form
$$
D=\begin{pmatrix}
B & * \\
0 & 1
\end{pmatrix}.
$$
Notice that $D^p=I$ therefore $(D-I)^p=0$. 
Put $E=\sum_{j=1}^{d}\frac{{-1}^{j-1}}{j}(D-I)^j..$
(This is $\log (D-I)$.) Then $E^p=0$ and
$$D^t=\exp(tE)=\sum_{j=0}^{d}\frac{1}{j!}(tE)^j.$$
I fact, $E$ is of the form 
$$
E=\begin{pmatrix}
A & v \\
0 & 0
\end{pmatrix}.
$$
where $A^p=0$ and 
$$B^t=\exp(tA)=\sum_{j=0}^{d}\frac{1}{j!}t^j(A)^j.$$
Let $d$ be the smallest positive
integer $d$ such that $A^d=0$. It can be seen that $d$
is the nilpotency class of $G$. 

In fact, the lower central
series of $G$ is the sequence consisting of the images
of $A$, $A^2$, etc.
}
In fact, if we let $A = \log B$ then the lower central
series of $G_B$ is the sequence consisting of the images
of $A, A^2, \ldots, A^{d-1}$.

\begin{claim}
The functions $v_i(x)$ are polynomials with $0$ constant term and of degree $ \leq d$, for
$i = 1, \ldots, m$.
\end{claim}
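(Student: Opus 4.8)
The plan is to analyze the closed form $v(x) = (B^{x-1} + \cdots + B + I)v$ and show each coordinate is a polynomial in $x$ of degree at most $d$ with zero constant term. The key observation is that since $(B-I)^d = 0$, we can write $B = I + N$ where $N$ is nilpotent with $N^d = 0$, so that $B^k = (I+N)^k = \sum_{j=0}^{d-1}\binom{k}{j}N^j$ is a polynomial expression in $k$ whose matrix coefficients are the fixed matrices $N^j$. Summing over $k = 0, 1, \ldots, x-1$ gives $v(x) = \sum_{k=0}^{x-1} B^k v = \sum_{j=0}^{d-1}\left(\sum_{k=0}^{x-1}\binom{k}{j}\right) N^j v$.

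The next step is to recall the hockey-stick identity $\sum_{k=0}^{x-1}\binom{k}{j} = \binom{x}{j+1}$, which is a polynomial identity valid over any commutative ring (it holds as an identity of polynomials in $x$, hence holds in $\F_p$ after reduction). Thus $v(x) = \sum_{j=0}^{d-1}\binom{x}{j+1} N^j v$, and each $\binom{x}{j+1} = \frac{x(x-1)\cdots(x-j)}{(j+1)!}$ is a polynomial in $x$ of degree exactly $j+1 \le d$ with zero constant term (it vanishes at $x=0$). Taking the $i$th coordinate, $v_i(x)$ is an $\F_p$-linear combination of these polynomials, hence itself a polynomial of degree at most $d$ with zero constant term. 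One should note that division by $(j+1)!$ makes sense in $\F_p$ here because $j+1 \le d < p$ (the nilpotency class of a $p$-group of the form $\F_p \ltimes \F_p^m$ is at most $p$, in fact $d \le p$ since $(B-I)^p = 0$ forces $d \le p$), so $(j+1)!$ is invertible mod $p$; alternatively one argues directly that $\binom{x}{j+1}$, being integer-valued, reduces to a well-defined polynomial function $\F_p \to \F_p$ represented by a polynomial of degree $j+1$.

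The main obstacle, such as it is, is being careful about the meaning of "polynomial" over $\F_p$: we want a genuine polynomial of bounded degree, not merely a function, and we must ensure the binomial coefficients $\binom{x}{j+1}$ yield honest degree-$(j+1)$ polynomials with coefficients in $\F_p$ rather than degenerating. This is where the bound $d \le p$ (equivalently, the relation $(B-I)^p = 0$ coming from $B^p = I$ in characteristic $p$) is essential — it guarantees the leading coefficient $1/(j+1)!$ does not vanish mod $p$ for the relevant range $j+1 \le d$, so the degree is preserved and at most $d$. With that in place the claim follows immediately from the displayed expansion.

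I would also remark that the zero constant term is automatic and consistent with the setup: $v(0)$ should be the zero vector since $H_v$ contains the identity matrix $M_B(0, v(0)) = I$, which forces $v(0) = \overrightarrow{0}$, matching the promise in the definition of $\HPGP'$ that the $Q_i$ have zero constant term.
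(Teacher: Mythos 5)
Your proof is correct, but it takes a genuinely different route from the paper's. The paper passes to the matrix logarithm $A=\log B=\sum_{j=1}^{d-1}\frac{(-1)^{j-1}}{j}(B-I)^j$, writes $B^k=e^{kA}=\sum_{j=0}^{d-1}\frac{A^j}{j!}k^j$, and then sums the monomials $k^j$ over $k=0,\ldots,x-1$ via Faulhaber's formula, invoking Jacobi's divisibility result (that the power-sum polynomials are divisible by $x+1$) to get the zero constant term. You instead write $B=I+N$ with $N^d=0$, expand $B^k=\sum_{j=0}^{d-1}\binom{k}{j}N^j$, and sum with the hockey-stick identity $\sum_{k=0}^{x-1}\binom{k}{j}=\binom{x}{j+1}$, obtaining $v(x)=\sum_{j=0}^{d-1}\binom{x}{j+1}N^jv$; here the degree bound and the vanishing at $x=0$ are immediate, with no Bernoulli numbers, no Faulhaber formula and no external reference needed, so your argument is more elementary and self-contained. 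Both routes face the same characteristic-$p$ issue: you must invert $(j+1)!$ for $j+1\le d$, while the paper must invert $j!$ for $j\le d-1$ together with the denominators in Faulhaber's formula, and in either case this is unproblematic exactly when $d<p$. Be aware of one imprecision on your side: you assert $j+1\le d<p$ but your parenthetical only justifies $d\le p$, and $d=p$ can in fact occur (when $m\ge p$); in that boundary case $1/(j+1)!$ with $j+1=p$ is undefined and the ``honest degree $j+1$ polynomial'' reading of $\binom{x}{j+1}$ fails. Your fallback remark rescues the claim there, since $\binom{k}{p}\equiv 0 \pmod p$ for $0\le k\le p-1$ and any function on $\F_p$ is a polynomial of degree at most $p-1\le d$ anyway, so degree $\le d$ and zero constant term still hold; note also that the paper's own proof silently needs the same restriction $d<p$ and that the later application assumes $p>d$, so this degenerate case does not affect the use of the claim.
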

\begin{proof}
We have
$$
A = \log B = \sum_{j=1}^{d-1}\frac{{-1}^{j-1}}{j}(B-I)^j.
$$
Then
$$
B^k= e ^{kA} = \sum_{j=0}^{d-1}\frac{A^j}{j!}k^j,$$
since $A^d = 0$. Therefore
\begin{align*}
v(x) & = \sum_{k=0}^{x-1} B^k v \\
& = \sum_{k=0}^{x-1} \sum_{j=0}^{d-1}\frac{A^jv}{j!}k^j \\
& = \sum_{j=0}^{d-1} \frac{A^jv}{j!} \sum_{k=0}^{x-1} k^j\\
& = \sum_{j=0}^{d-1} \frac{A^jv}{j!} p_j(x-1),
\end{align*}
where $p_0(x-1) = x$, and $p_j(x)$ is a degree $j+1$ polynomial expressed by the Faulhaber's formula,
for $j = 1,\ldots, d-1$.
It is known~\cite{jacobi}  that $p_j(x)$ is divisible by $x+1$, for all $j$. Therefore indeed $v_i(x)$ is a degree 
$\leq d$ polynomial with
constant member zero, for $i=1, \ldots, m$.
\end{proof}
Let us now suppose that our input $f$ to $\HSP(G_B, {\cal H})$
hides the subgroup 
$$H_v =  
\left \{\begin{pmatrix}
B^x & v(x) \\
0 & 1
\end{pmatrix} ~~:~~ x \in \F_p \right \}.
$$
We can take as coset representatives
$$
N = \left \{\begin{pmatrix}
I & u \\
0 & 1
\end{pmatrix} ~~:~~  u \in \F_p^m \right \}.
$$
Since
$$
\begin{pmatrix}
I & u \\
0 & 1
\end{pmatrix}
\begin{pmatrix}
B^x & v(x) \\
0 & 1
\end{pmatrix} =
\begin{pmatrix}
B^x & u + v(x) \\
0 & 1
\end{pmatrix},
$$
the left cosets of $H_v$ are of the form
$$
 \left\{ \begin{pmatrix}
B^x & u+v(x) \\
0 & 1
\end{pmatrix}
~~:~~ x \in \F_p \right \}=
\left\{M_B(x,u+v(x))
~~:~~ x \in \F_p \right \}
,$$
for $ u\in \F_p^m$. By a standard efficient quantum procedure 
we can create, for a random $u \in \F_p^m$, the {coset state}
$$\sum_{x \in \F_p}\ket{x}\ket{u+v(x)}.
$$
But this is also  a random { level set state} of the function 
$$f : \F_p \times \F_p^m \rightarrow \F_p^m, ~~~~~ f(x,y) = y - v(x),$$
and therefore the input to $\HPGP'(\F_p, 1,m,d)$ hiding 
the polynomial $v(x)$. From the solution $v(x)$ we can recreate the solution of the $\HSP$ problem
since $v = v(1)$.

\end{proof}

\suppress{
{From} 
$\begin{pmatrix}
0 & tv \\
0 & 0
\end{pmatrix}
\begin{pmatrix}
tA & 0 \\
0 & 0
\end{pmatrix}=0$ it follows that
$$(tE)^j=
\begin{pmatrix}
(tA)^j & v_j \\
0 & 0
\end{pmatrix},
$$
where $v_j=\sum_{k=0}^{j-1}t^{k+1}A^kv$, a vector
whose entries are polynomials in $t$ having zero
constant term and degree at most $d$ (the latter fact
follows from $A^d=0$). Therefore for $D^t$ we have that 
$$D^t=
\begin{pmatrix}
B^t & f(t) \\
0 & 0
\end{pmatrix},
$$
where $f_v(t)$ is vector whose entries are also
polynomials in $t$ with zero constant term 
and degree at most $d$. We have that,
except those contained in $N$, the subgroups
of $G$ of order $p$ are the groups
$H_v$ ($v\in \F_p^m$)
consisting of matrices of the form
$$\begin{pmatrix}
B^t & f_v(t) \\
0 & 1
\end{pmatrix},$$
($t=0,\ldots,p-1$). 
The left cosets of $H_v$ are the sets of the form
$$
\begin{pmatrix}
B^t & u+f(t) \\
0 & 1
\end{pmatrix}=M_B(t,u+f_v(t)),$$
($t=0,\ldots,p-1$) for some $u\in \F_p^m$.
Using the encoding $(t,z)$ for the matrix $M_B(t,z)$,
the coset states are
$$\sum_{t=0}^{p-1}\ket{t}\ket{u+f_v(t)}.$$
Then the coset states are actually the uniform superpositions
of the level sets
$$\{(t,z):z-f_v(t)=u\}$$
of the function $$(t,z)\mapsto z-f_v(t).$$

Thus the hidden subgroup problem in $G$ (for subgroups
of order $p$ not contained in $N$) can be reduced to
a special case of hidden polynomial problem on $\F_p^{m+1}=\F_p\times
\F_p^m$,
where the hidden polynomial is of the form $F(t,z)=z-f(t)$, for
a polynomial map $f$ from $\F_p$ to $\F_p^m$ of degree at most $d$.
Of course, the vectors $w_1,\ldots,w_d$ cannot be arbitrary and hence
only some of polynomial maps $f$ (with $f(0)=0$) occur as $f=f_v$.
However, one can show that the hidden polynomial
problem for $F(t,z)=z-f(t)$ for an arbitrary polynomial 
map $f$ with $f(0)=0$ can be
reduced to the hidden subgroup problem in a certain group of the form
$\F_p\ltimes \F_p^{m'}$ for some $m'>m$.
}

\suppress{

\section{Application: HSP in certain $p$-groups}

In \cite{DISW}, based on arguments from
\cite{BCvD}, the authors with Decker and Wocjan showed
that the hidden subgroup problem
in groups of the form $G=\F_p\ltimes \F_p^m$ can be translated
to finding certain hidden polynomial maps. To be more specific,
assume that $f:\F_p\rightarrow \F_p^m$ is a polynomial map
of degree at most $d$ with zero constant term (that is, 
$f(t)=(f_1(t),\ldots,f_m(t))$, where the $f_j$s are univariate
polynomials of degree at most $d$ with zero constant term).
Assume that we have a (quantum) oracle for computing
values of a function ${\cal P}$ from $\F_p^{m+1}$ to bit strings
such that ${\cal P}(x,y_1,\ldots,y_m)={\cal P}(x',y_1',\ldots,y_m')$
if and only if $(y_1,\ldots,y_m)-f(x)=(y_1',\ldots,y_m')-f(x')$.
The task is to determine $f$. It tuns out that the degree bound
$d$ coincides with the nilpotency class of $G$. In \cite{DHIS}
the authors with Decker and H{\o}yer showed
 that latter hidden polynomial problem can be solved
in quantum polynomial time (in $\log p$) when $d$ and $m$ are
constant. A critical ingredient of the method is solving 
systems of diagonal polynomial equations with sufficiently
many variables. At the time of writing \cite{DHIS} polynomial time
algorithms (except for the cases $d=1,2$) were available only
for the case when the number of equations is constant. Using
the main result of the present paper it is possible to dispense
with the assumption that $m$ is constant. Details in the Appendix.
We obtain
\begin{corollary}
The hidden subgroup problem of in semidirect products
of the form $\F_p\ltimes \F_p^m$ of constant nilpotency
class can be solved on a quantum computer in time polynomial 
in $m\log p$.
\end{corollary}
}

\section{Proof of Theorem~\ref{thm:hpgp_to_sde}}

In this part we outline a modified version of the method
of our work~\cite{DHIS} with Decker and H{\o}yer. A critical ingredient is solving 
systems of diagonal polynomial equations with sufficiently
many variables. At the time of writing \cite{DHIS} polynomial time
algorithms (except for the cases $d=1,2$) were available only
for the case when the number of equations is constant.)
Now we have a version which works in polynomial time
even if $m$ is not constant.

\begin{proof}[Proof of Theorem~\ref{thm:hpgp_to_sde} (sketch)]
%
A solution for constant $p$ is given in \cite{DISW}. 
(Interestingly, that solution goes through a reduction 
to the variant of the hidden subgroup problem with coset
states as input in a $p$-group of nilpotency 
class $d+1$ and exponent $p$. The latter problem is solved by
the method of the paper
\cite{FIMSS03} by the authors with Friedl, Magniez and Shen,
which works efficiently in groups of
constant derived length and constant exponent.) Thefore 
we may assume that $p>d$. Although this assumption is 
not essential, it simplifies presentation very much.

The input for \HPGP'{} consists of uniform superpositions
of random level sets states of the form (\ref{eq:levstate}),
which, for the special case we have are states
$$\ket{x}\ket{u+\sum_{j=1}^dx^jw_j},$$
for random (unknown) $u \in \F_p^m$.
To handle dependency on $u$, we apply the Fourier transform 
of $\F_p^m$ to the second register of such a state.
The result is
$$\omega^{\sum_{k=1}^my_ku_k}\sum_{x=0}^{p-1}
\omega^{\sum_{j=1}^d x^j\sum_{k=1}^my_kw_{jk}}\ket{x}\ket{y}
=\omega^{\sum_{k=1}^my_ku_k}\ket{\phi_y}\ket{y},$$
where $\omega=\sqrt[p]{1}$ and
$$\ket{\phi_y}=
\sum_{x=0}^{p-1}\omega^{\sum_{j=1}^d x^j\sum_{k=1}^my_kw_{jk}}\ket{x}.$$
Measuring the second register we obtain, up to a global phase,
the state $\ket{\phi_y}$ with known $y$. We drop the useless
states $\ket{\phi_0}$. It can be seen that
each $y\in \F_p^m$ occurs with equal probability,
therefore $\ket{\phi_0}$ occurs with probability $\frac{1}{p^m}$.

We rewrite $\ket{\phi_y}$ in a more general form suitable for recursion.
For hidden parameters $\hparam_1,\ldots,\hparam_\ell\in \F_p$ and for $Y\in \F_p^{d\times \ell}$
let $$\ket{\psi_Y}:=\sum_{x=0}^{p-1}
\omega^{\sum_{j=1}^d x^j\sum_{k=1}^\ell Y_{jk}\hparam_k}\ket{x}.$$
In words, the coefficient of $x^j$ in the phase of the state $\ket{\psi_Y}$
is a linear combination of the hidden parameters  
with known coefficients $Y_{j1},\ldots,Y_{j\ell}$.
 Then
$\ket{\phi_y}=\ket{\psi_Y}$, where $\ell=dm$,
$\hparam_{(j-1)d+k}=w_{jk}$, $Y_{j,(j-1)d+k}=y_{k}$,
and $Y_{j,(j'-1)d+k}=0$, for $j,j'=1,\ldots,d$, $j'\neq j$, 
$k=1,\ldots,m$. The goal is to determine the hidden
parameters $\hparam_1,\ldots,\hparam_\ell$.

Let $n=n(\ell,d)$ be a positive integer such that
for any positive integer $d'\leq d$ nonzero solutions 
of systems of equations of the form
$$\sum_{j=1}^n a_{ij} \xi_j^{d'}=0, \mbox{~~~~{\rm for} }i=1,\ldots,\ell,$$
in the variables $\xi_1,\ldots,\xi_n$ can be found in time
polynomial in $n\ell\log p$. 

Using $n$ level set superpositions, we obtain 
$n$ states of the form $\ket{\psi_Y}$ with various $Y$.
More precisely, up to a global phase we obtain a state
$$
\ket{\psi_{Y^1}}\ldots\ket{\psi_{Y^n}}=
\sum_{x_1,\ldots,x_n=0}^{p-1}
\omega^{
\sum_{j=1}^{d} (x_1^j \sum_{k=1}^\ell Y^1_{jk}\hparam_k+\ldots+
x_n^j\sum_{k=1}^\ell Y^n_{jk}\hparam_k )
}\ket{x_1,\ldots,x_n}.$$
If the degree $d$ term is completely missing from
the phase of state $\ket{\psi_{Y^i}}$, that is, $Y^i_{dk}=0$ for 
$k=1,\ldots,\ell$,
then we take $\ket{\psi_{Y^i}}$ and ignore all the other states. Otherwise
we produce a similar state without degree $d$ term as follows.
(This is the point where the new algorithm differs from
that of our eralier work \cite{DHIS} with Decker and H{\o}yer. Originally the degree $d$ terms
had to be eliminated one-by-one which caused an exponential blowup
of the costs in $m$. The main result of the present paper allows 
us to eliminate all the degree $d$ terms simultaneously, in one step,
saving the exponential blowup.)

We find a nonzero solution $(\delta_1,\ldots,\delta_n)\in \F_p^n$
of the system of equations $\sum_{i=1}^n\delta_i^dY^i_{k}=0$, for
$k=1,\ldots,\ell$. (We have to solve $\ell$ homogeneous linear 
equations in $\delta_1^d,\ldots,\delta_n^d$.) Then
we add a fresh register initialized to $\sum_{t=0}^{p-1}\ket{t}$, and
subtract $\delta_ix$ from the $i$th register. We obtain
$$
\sum_{x=0}^{p-1}\sum_{x_1,\ldots,x_n=0}^{p-1}
\omega^{
\sum_{j=1}^{d}((x_1+\delta_1 x)^j\sum_{k=1}^\ell Y^1_{jk}\hparam_k+\ldots +
(x_n+\delta_n x)^j\sum_{k=1}^\ell Y^n_{jk}\hparam_k)
}\ket{x_1,\ldots,x_n}\ket{x}.$$
Collecting the terms according to the degree of $x$ in the phase,
we can rewrite the state as
$$
\sum_{x=0}^{p-1}\sum_{x_1,\ldots,x_n=0}^{p-1}
\omega^{\sum_{j=0}^dx^j\sum_{k=1}^\ell Z_{jk}(x_1,\ldots,x_n)\hparam_k
}\ket{x_1,\ldots,x_n}\ket{x}.$$
Here $Z_{jk}(x_1,\ldots,x_n)$ is a degree $d-j$ polynomial
in $x_1,\ldots,x_n$. By the choice of $\delta_1,\ldots,\delta_n$,
we have 
$$Z_{dk}(x_1,\ldots,x_n)=
\delta_1^dY^1_{dk}+\ldots+\delta_n^dY^n_{dk}=0.$$
We also have
$$Z_{d-1,k}(x_1,\ldots,x_n)=
d\delta_1^{d-1}Y^1_{dk}x_1+\ldots+d\delta_n^{d-1}Y^n_{dk}x_n
+
\delta_1^{d-1}Y^1_{d-1,k}+\ldots+\delta_n^{d-1,k}Y^n_{dk}.
$$
We have $\delta_i\neq 0$, for at least one index $i$ from
$1,\ldots,n$. As $Y^i_{dk}$ is nonzero for at least one $k$,
the polynomial $Z_{d-1,k}$ contains the term $x_i$ with nonzero
coefficient. Hence, for a random choice of $x_1,\ldots,x_n$, 
it will be nonzero
with probability at least $\frac{p-1}{p}$. Therefore, if we
measure the first $n$ registers, we obtain a state of
the form
$$
\sum_{x=0}^{p-1}
\omega^{\sum_{j=0}^{d-1}x^j\sum_{k=1}^\ell Z_{jk}\hparam_k
}\ket{x},$$
where not all the vectors $Z_{jk}$ are zero.

Starting with $n^{d-1}$ states with degree $d$ phase
(coming from $n^{d-1}$ level set states),
applying this procedure to groups of size $n$ 
we obtain $n^{d-2}$ states with degree $d-1$ phase,
from which we can produce $n^{d-3}$ degree $d-2$ states
and so on. Eventually, with overall failure probability 
at most $n^d/p$, we obtain a state of the form
$$\sum_{x=0}^{p-1}\omega^{x\sum_{k=1}^\ell z_k\hparam_k}\ket{x},$$
with known $z_1,\ldots,z_k$, not all zero.
Applying the inverse Fourier transform of $\F_p$, we
obtain the value for $\sum_{k=1}^\ell z_k\hparam_k$, that is,
a linear equation
for $\hparam_1,\ldots,\hparam_\ell$. Using this equation, we can
substitute a linear combination of the others (and a constant term)
into
one of the parameters, and we can do a recursion with 
$\ell-1$ unknown parameters. 

The whole procedure uses $\ell n^{d-1}$ level set superpositions,
has overall failure probability $\ell n^{d-1}/p$ and requires
$\poly (\ell n^{d-1}\log p)$ time to determine the hidden
coefficients $w_j$. For our task, we take $\ell=md$.

\end{proof}

\section*{Acknowledgements}
The authors are grateful to the anonymous referees for their helpful
remarks and suggestions. The research is partially funded by the Singapore Ministry of Education and the 
National Research Foundation, also through the Tier 3 Grant ``Random numbers 
from quantum processes,'' MOE2012-T3-1-009.
Research also partially supported by the European Commission
IST STREP project Quantum Algorithms (QALGO) 600700,
by the French ANR Blanc program under contract ANR-12-BS02-005 (RDAM project), 
and by the Hungarian National Research, Development and Innovation Office --
NKFIH, Grant NK105645.


\appendix

\end{document}